\newtheorem{claim}{Claim}
\newtheorem{lemma}{Lemma}
\newtheorem{theorem}[lemma]{Theorem}
\newcommand{\QED}{\mbox{}\hfill \rule{3pt}{8pt}\vspace{10pt}\par}
\newenvironment{proof}{\noindent \mbox{}{\bf Proof:}}{\QED}
\begin{document}

\title{Multi-skill Collaborative Teams based on Densest Subgraphs}

\begin{titlepage}

\author{
Amita Gajewar \thanks{Yahoo! Inc, Santa Clara, CA, USA. \hbox{E-mail}:~{\tt amitag@yahoo-inc.com}.} 
\and 
Atish {Das Sarma} \thanks{Google Research, Google Inc., Mountain View, CA, USA. \hbox{E-mail}:~{\tt dassarma@google.com}. Part of the work done while at Georgia Institute of Technology, GA.}
}

\date{}

\maketitle \thispagestyle{empty}

\vspace*{.4in}

\maketitle

\begin{abstract}
We consider the problem of identifying a team of skilled individuals for collaboration, in the presence of a social network. Each node in the input social network may be an expert in one or more skills - such as theory, databases or data mining. The edge weights specify the affinity or collaborative compatibility between respective nodes. Given a project that requires a set of specified number of skilled individuals in each area of expertise, the goal is to identify a team that maximizes the collaborative compatibility. For example, the requirement may be to form a team that has at least three databases experts and at least two theory experts.

We explore team formation where the collaborative compatibility objective is measured as the density of the induced subgraph on selected nodes. The problem of maximizing density is NP-hard even when the team requires a certain number of individuals of only one specific skill. We present a 3-approximation algorithm that improves upon a naive extension of the previously known algorithm for densest at least $k$ subgraph problem. We further show how the same approximation can be extended to a special case of multiple skills as well. Our problem generalizes the formulation studied by Lappas et al. [KDD '09]. Further, they measured collaborative compatibility in terms of diameter and the spanning tree costs. Our density based objective also turns out to be more robust in certain aspects. 

Experiments are performed on a crawl of the DBLP graph where individuals can be skilled in at most four areas - theory, databases, data mining, and artificial intelligence. In addition to our main algorithm, we also present heuristic extensions to trade off between the size of the solution and its induced density. These density-based algorithms outperform the diameter-based objective on several metrics for assessing the collaborative compatibility of teams. The solutions suggested are also intuitively meaningful and scale well with the increase in the number of skilled individuals required. 
\end{abstract}



\end{titlepage}


\section{Introduction}
A team formation problem consists of forming a team from a large set of candidates such that the resulting team is best suited to perform the assignment. The main difficulty in providing an automated way to form a team from the solution space is the categorization of the desired attributes quantitatively. In spite of this, the problem has attracted many researchers and various interesting approaches have been suggested over the years, as we mention them in the related work section. In this spirit, we study this problem in the context of social network with a goal to identify the most collaborative team that satisfies the skill-set requirements of the project. Certainly, the naive approach would be to just find the candidates that match the requirements the best. However, considering the social network associated with the candidates add a value to the solution becasue, intuitively, such team is more likely to demonstrate better collaborative compatibility. This is also evident in practice, where many companies tend to promote employee referral program while hiring a candidate.

We model this team formation problem in the social network context by considering the network graph that connects the individuals, wherein each individual is represented by a node in the graph and an association between individuals is represented by an edge in the graph. In a more generic sense, each node can be assigned a set of desired attributes and an edge can be assigned a weight representing the collaborativeness between the individuals it is connecting. Note that, this model could further be extended in multiple dimensions and we believe that the work we present in this paper could be a good starting point with this regard. For example, one possible extension to this graph model would be a hypergraph model wherein we can accomodate many criteria - weight associated with hyperedge could define the colloaborative compatibility between the set of nodes (instead of just two nodes), hyperedge could also be used to denote the set of nodes that represent a certain group, etc. 

In this paper, as a starting point, we define the problem where each node is associated with a set of skills and a weight of the edge reflects the cohesiveness between two connecting nodes(users), and a goal is to form a {\em collaborative} team for a project that requires a specified number of people in each of a set of skills . In this setting, two users can collaborate better as a team if they have a high-weight edge (strong affinity for interaction) between them.  Specifically, consider the following example where a social network of computer scientists is presented. Each user is skilled in a subset of areas between theory, databases and data mining. A company wants to hire people for a predetermined project. The goal of the project requires that the team consists of at least three database researchers, at least two theory researchers, and at least one researcher with expertise in data mining. Presented with the social network where edges reflect collaborative interactions, how should the company go about hiring a team for the project?

A special case of this problem was studied in~\cite{LLT}. They consider team formation when the team requires at most one person each in a set of different skills. Our problem formulation generalizes this by allowing the team to require multiple skilled individuals in any skill. Clearly there are projects where multiple people with specific skills may be desired. It turns out that this generalization makes the problem significantly harder and more interesting. For example, the problem is no longer trivial even when the social network contains users that are either skilled or not skilled in just one specific area. Suppose a project requires eight database researchers, and the social network contains people who are either skilled in databases or not, how does one go about choosing the team? We shall mention the complexity as well as algorithmic results for this special case as well shortly.

A critical question in team formation based on a social network is to determine the collaborative quality of a team. The edges specify the collaborative compatibility of two nodes. However, given a subset of say $k$ nodes in the social network (let us even say these $k$ nodes are connected), how do we know how {\em collaborative} this team is? To tackle this, ~\cite{LLT} suggested two objectives: one based on the diameter of the subgraph induced by these $k$ nodes, and another based on the spanning tree cost of these nodes; and demontrated the potential of these ideas through experimental results. These objectives can certainly be applied to solve the problem we define in this paper. In fact, we provide the extention to their diameter-based algorithm, prove the 2-approximation bound and also complement with experimental results. Similarly, the minimum-spanning tree based approach could also be extended to the problem defined here.
However, the main focus of our paper is a {\em novel} density based objective that we propose for this problem; therefore, the majority of this paper's contributions are related to this density objective. Specifically, we define the collaborative affinity of a team of $k$ nodes to be proportional to the density of the induced subgraph. Using density as a measure of the quality of an induced subgraph of nodes has certain intuitive merits over using diameter or minimum spanning tree costs; we describe these in section~\ref{sec:def}.


We briefly summarize the problem definiton here: given a set of skills $1, 2, \ldots, t$, and requirements $k_1, k_2, \ldots, k_t$, and a social network of nodes connected by (weighted) edges, the goal is to pick a subset of nodes such that at least $k_i$ distinct nodes possess skill $i$, for $1\leq i\leq t$. The same node, however, may contribute to two different skills. The objective value of the solution is the density of the induced subgraph on these nodes.
The goal is to maximize this objective. Notice that the number of returned nodes may be as small as $k_{max} = \max_{i}{k_i}$ or be even larger than $\sum_{i}{k_i}$. We now summarize the contributions of this paper.

\noindent{\bf Our Contributions.}

\begin{itemize}
\item We present a novel problem definition for team formation to maximize collaborative compatibility. The constraint of the problem requires the team to comprise of at least a {\em specified number} of skilled individuals in each of a set of skills. This generalizes previous work that required forming a team with at least {\em one} skilled individual in each of a set of skills.
\item As a measure of collaborative compatibility, we suggest a density based objective. Density is a novel metric for this domain and we show that it has certain desirable properties for measuring compatibility. Our density based team formation problem also generalizes previous graph algorithms work on finding densest subgraphs with size constraints.
\item We address the collaborative team formation problem when the team requires one or more skills. We show that optimizing even the special case of a single skill is  NP-hard under our density-based metric, as well as the previously suggested diameter-based metric. The main theoretical result of the paper is to present a novel 3-approximation algorithm for the density based team formation problem for both single as well as a special case of multiple skills. This improves upon a naive extension of previous work on size constrained densest subgraph problems. We also show how previous work on a 2-approximation for the diameter-based objective can be extended to our generalized problem.
\item We present several heuristic algorithms that build on our 3-approximation for density-based team formation. These algorithms trade-off between the size of the returned solution and the density, while respecting the constraints on the skill requirements. 
\item We perform experiments on all these algorithms on the DBLP graph. Experiments show that density-based algorithms perform well in practice, identifying tightly knit and highly skilled teams and also scale well with the size of the team and skill requirements. 
\item We measure qualitative evidence of the teams reported by both denisty-based and diameter-based algorithms and show that the density-based algorithms compare favorably to the diameter-based algorithms on a number of different metrics. Further analysis of the teams (by inspecting the members of the team) reported show that the density-based approach suggest the teams that are more intuitive and meaningful compared to diameter-based teams.
\end{itemize}

\noindent{\bf Overview.} We mention related work in Section~\ref{sec:rel}. The various problem definitions, notations and some properties are formalized in Section~\ref{sec:def}. Our theoretical contributions, including the main 3-approximation algorithm for our density based objective are described in Section~\ref{sec:theory1}. The theoretical work on a diameter based objective is presented in Section~\ref{sec:theory2}. Finally, some additional heuristic algorithms and experimental results are detailed in Section~\ref{sec:exp}. 

\section{Related Work}
\label{sec:rel}
Various interesting approaches for {\em team formation} have been studied over the years. In operations research~\cite{CL, ZK, BDD, WOMJ}, the problem is defined as finding an optimal match between people and demanded functional requirements. It is often solved using techniques such as simulated annealing, branch-and-cut or genetic algorithms~\cite{BDD, ZK, WOMJ}. Another interesting problem formulation requires taking into consideration the psychological aspects of the individuals involved in order to form a team of efficient collaboration, e.g, the work by Fitzpatrick and Askin ~\cite{FA}, and Chen and Lin in~\cite{CL}. Although all these approaches are interesting, they do not use the possible presence of a social graph structure between the individuals. Therefore, these approaches are complementary to ours. Further, Gaston et al.~\cite{GSJ} provide an experimental study on the effects of a graph structure among individuals on the performance of a team. 

Our problem formulation differs from these fundamentally by requiring a solution where the optimality is determined based on the properties associated with a social graph structure among the individuals. In particular, we aim to form a team that contains at least $k_i$ nodes of skill $i$ such that the density of the resulting solution subgraph is maximized. A similar problem has been addressed by Lappas et. al.~\cite{LLT}. They try to find a team that contains at least $1$ node for each skill $i$, with the cost of a solution measured in terms of either a diameter or a minimum spanning tree. Our problem definition generalizes this requirement and suggests a new density based measure for solution's objective.

The problem of finding size-bound densest subgraphs is well-studied. Finding a maximum density subgraph on an undirected graph can be solved in polynomial time~\cite{G84, L}. However, the problem becomes NP-hard when a size restriction is enforced. In particular, finding a maximum density subgraph of size exactly $k$ is NP-hard~\cite{AHI, FKP} and no approximation scheme exists under a reasonable complexity assumption~\cite{K}.
Khuller and Saha~\cite{KS} considered the problem of finding densest subgraphs with size restrictions and showed that these are NP-hard. Khuller and Saha ~\cite{KS} and also Andersen and Chellapilla ~\cite{AC} gave constant factor approximation algorithms. Our problem definition varies from these because we not only require to find the maximum density subgraph of size at least $k$, but, we also require that this subgraph contain $k_i$ nodes of property (or skill) $i$ such that $k = k_1 + k_2 + ... + k_n$. Thus, we also generalize past work on finding size-bound maximum density subgraphs.

\section{Problem Definition}
\label{sec:def}

\noindent{\bf Notation.}
Let ${\cal X} = \{ 1, \ldots,n \}$ denote a set of $n$ individuals and ${\cal A} =$ \{$a_1, \ldots, a_m$\} denote a set of $m$ skills. Each individual $i$ is associated with a set of skills $X_i \subseteq {\cal A}$. If $a_j \in X_i$, then an individual $i$ has skill $a_j$. For each skill $a$, we define its support set, $S(a)$, as the set of individuals in $\cal X$ with skill $a$. That is, $S (a) = \{ i |  i \in {\cal X}$ and $a \in X_i \}$. A task $\cal T$ is a set of pairs where each pair, <$a_j$,$k_j$>, specifies that at least $k_j$ individuals of skill $a_j$ are required to perform the task.  

Let $G({\cal X}, E)$ denote the undirected, weighted graph representing the social network associated with the set of individuals ${\cal X}$. 
We use the notations $E(G)$ and $V(G)$ to represent the edge set and vertex set associated with the graph $G$. If ${\cal X'} \subseteq V(G)$, we use $G[\cal X']$ to denote the subgraph of $G$ induced by the nodes in $\cal X'$. Further, $W(\cal X')$ denotes the sum of the edge-weights associated with all the edges in the subgraph induced by the nodes in $\cal X'$. We also define a distance function between any two node $i, i'$ in a graph $G$ as the sum of the edge-weights along the shortest path between $i$ and $i'$ in $G$. Further, without loss of generality, we assume that the graph $G$ is connected; we can transform every disconnected subgraph to a connected one by simply adding an edge that denotes zero collaborative compatibility. Given a measure of collaborative compatibility $Cc()$, we now formalize the problems considered in this paper.

\noindent{\bf Single Skill Team Formation (sTF).} Given a set of $n$ individuals ${\cal X} = \{ 1, \ldots, n \}$, a graph $G({\cal X}, E)$, task ${\cal T} = \{<a, k>\}$, find $\cal X' \subseteq X$, such that $|{\cal X}'  \cap S(a)| \ge k$, and the collaborative compatibility $Cc(\cal X')$ is optimized. 

\noindent{\bf Multiple Skill Team Formation (mTF).} Given a set of $n$ individuals ${\cal X} = \{ 1,\ldots, n \}$, a graph $G({\cal X}, E)$, task ${\cal T} =$ \{$<a_1, k_1>, <a_2, k_2>, \ldots, <a_m, k_m>$\}, find $ \cal X' \subseteq X$, such that $|{\cal X}'  \cap S(a_j)| \ge k_j$  for each $j \in \{ 1, \ldots, m \}$ and the collaborative compatibility $Cc(\cal X')$ is optimized.


The main metric that we consider for collaborative compatibility for {\it sTF} and {\it mTF} is the following density based objective. In addition to this, we consider a diameter based objective as well (suggested in~\cite{LLT}) for comparison.

\noindent{\bf Maximum Density(D).} Given a graph $G({\cal X}, E)$ and a set of individuals $\cal X' \subseteq X$, we define the density collaborative compatibility of $\cal X'$, denoted by {\it Cc-D}$(\cal X')$ to be the density of the induced  subgraph $G[\cal X']$. Recall that the density $d(G)$ of a graph $G$ is defined as $d(G) = \frac{W(G)}{|V(G)|}$ . The higher the value of the density, the better is the collaborative compatibility. An optimal solution $\cal X' \subseteq X$, is the team that can perform task $\cal T$ and has maximum density. 

\noindent{\bf Minimum Diameter(R).} Given a graph $G({\cal X}, E)$ and a set of individuals $\cal X' \subseteq X$, we define the diameter collaborative compatibility of $\cal X'$, denoted by {\it Cc-R}$(\cal X')$, to be the diameter of the subgraph $G[\cal X']$. Recall that the diameter of a graph is the largest shortest path between any two nodes in the graph. 
An optimal solution $\cal X' \subseteq X$, is the team that can perform task $\cal T$ and has minimum diameter. 

In the following sections, we refer to the {\it Single Skill Team Formation (sTF)} and {\it Multiple Skill Team Formation (mTF)} problems with collaborative compatibility {\it Cc-R} as {\it Diameter-sTF} and {\it Diameter-mTF}, respectively.  Similarly, for the collaborative compatibility {\it Cc-D} we refer to the corresponding problems as {\it Density-sTF} and {\it Density-mTF} respectively.

\noindent{\bf Properties.} We now describe some properties of the maximum density objective. Notice that neither of these properties hold on {\it Diameter-sTF} or {\it Diameter-mTF}.
For brevity, we mention the intuition without a rigorous definition or proof.

\noindent{\bf Strict Monotonicity.} If a communication edge (with positive weight) is added between two nodes in the solution set for the {\it Density-sTF} or {\it Density-mTF} problem, then the collaborative compatibility objective {\it Cc-D} for the solution necessarily increases. Similarly, if a communication edge already present is deleted, then the {\it Cc-D} objective value decreases. 
This seems intuitive as an added collaboration between two people in the team enhances the quality of the team. However, in the case of diameter, adding or deleting an edge may not affect the solution at all. 

\noindent{\bf Sensitivity.} The {\it Cc-D} value for {\it Density-sTF} or {\it Density-mTF} does not increase or decrease radically upon adding or deleting an edge. Specifically, it can only change to an extent depending on the weight of the added or deleted edge, compared to the total weight of edges in the solution. However, adding or deleting an edge can radically change the diameter (for example make it finite from infinite) for an induced subgraph; this implies that the diameter objective is highly sensitive to small change. 


The properties for density based objectives fall out of the fact that adding or deleting edges only gradually alters the density of a solution subgraph. Diameter based objectives (or even the minimum spanning tree based objective suggested in~\cite{LLT} that we do not consider in this paper) are not smooth in this sense; altering the graph slightly can change the objective radically. These properties make density based objectives somewhat more suitable. One drawback, however, of density as an objective arises from the fact that the optimal solution may contain disconnected components. Notice that this is not the case for the diameter based objective, however, although the solution returned is connected it may be of large size including non-skilled (undesired) nodes that are required to ensure the connectivity. To ensure the connectivity property for the density-based solutions, in the experimental section we suggest several heuristic algorithms.

Eventually, the quality of teams produced by different definitions
needs to be evaluated (potential for collaboration) based on the measures neutral to these definitions; we make such objective comparisons in the experimental section.

\section{Density-based objective}
\label{sec:theory1}
In this section, we claim that {\it Density-sTF} and {\it Density-mTF} are NP-hard problems. We then present the algorithms {\it s-DensestAlk} (Algorithm ~\ref{algo:sDlk})  and {\it m-DensestAlk} (Algorithm ~\ref{algo:mDlk}) for {\it Density-sTF} and {\it Density-mTF}, respectively. Further, we prove that {\it Density-sTF} achieves 3-approximation factor. 

\begin{theorem}
{\it Density-sTF} and {\it Density-mTF} problems are NP-complete.
\end{theorem}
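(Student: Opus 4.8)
The plan is to prove NP-completeness by separately establishing membership in NP and NP-hardness for the natural decision versions of both problems. I phrase the decision version of \textit{Density-sTF} as: given $G(\mathcal{X},E)$, a task $\{\langle a,k\rangle\}$, and a threshold $D$, does there exist $\mathcal{X}'\subseteq\mathcal{X}$ with $|\mathcal{X}'\cap S(a)|\ge k$ and density $d(G[\mathcal{X}'])=W(\mathcal{X}')/|\mathcal{X}'|\ge D$? Membership in NP is immediate: a candidate team $\mathcal{X}'$ serves as a polynomial-size certificate, and verifying the skill constraint $|\mathcal{X}'\cap S(a)|\ge k$ together with computing $W(\mathcal{X}')/|\mathcal{X}'|$ and comparing it to $D$ takes time polynomial in the size of $G$. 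The same verification works verbatim for \textit{Density-mTF}, simply checking $|\mathcal{X}'\cap S(a_j)|\ge k_j$ for each of the $m$ skills.

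For NP-hardness of \textit{Density-sTF}, I would reduce from the densest-at-least-$k$-subgraph problem (DalkS), shown to be NP-hard in~\cite{KS}: given a graph $H$ and an integer $k$, find $S\subseteq V(H)$ with $|S|\ge k$ maximizing $W(H[S])/|S|$. The reduction is structural and tight. Given a DalkS instance $(H,k)$, I construct a \textit{Density-sTF} instance on the same graph $G=H$ (unit edge weights being a special case of the weighted model), introduce a single skill $a$, declare that every vertex possesses $a$ so that $S(a)=\mathcal{X}=V(H)$, and set the task to $\{\langle a,k\rangle\}$. Under this encoding the skill constraint $|\mathcal{X}'\cap S(a)|\ge k$ collapses exactly to the size constraint $|\mathcal{X}'|\ge k$, and the objective \textit{Cc-D} coincides with the DalkS objective. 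Hence $\mathcal{X}'$ is a feasible team of density at least $D$ for \textit{Density-sTF} if and only if it is a feasible set of density at least $D$ for DalkS, so the two optima agree and a polynomial-time solver for one yields a polynomial-time solver for the other.

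Finally, \textit{Density-mTF} contains \textit{Density-sTF} as the special case $m=1$ (a single pair $\langle a_1,k_1\rangle$), so its NP-hardness follows immediately; combined with the NP membership above, both problems are NP-complete. The step I expect to require the most care is the choice of source problem rather than any calculation: the equivalence is exact only because the feasible region of DalkS, namely all vertex sets of size ``at least $k$,'' literally matches the feasible region induced by our skill constraint. A reduction from densest-\emph{exactly}-$k$-subgraph (or from \textsc{Clique}) would \emph{not} yield a clean equivalence, since the ratio density $W(\mathcal{X}')/|\mathcal{X}'|$ can increase when additional well-connected vertices are added, so an ``exactly $k$'' optimum need not correspond to an ``at least $k$'' optimum. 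Isolating DalkS as the right starting point, and checking that the WLOG zero-weight edges added for connectivity contribute nothing to $W(\mathcal{X}')$ and hence leave the density unchanged, is what makes the reduction a genuine two-way equivalence rather than a one-directional bound.
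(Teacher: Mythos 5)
Your proposal is correct and follows essentially the same route as the paper: a reduction from the densest-at-least-$k$-subgraph problem of~\cite{KS} in which every vertex is assigned the single skill $a$, so that the skill constraint collapses to the size constraint, with \textit{Density-mTF} handled as the $m=1$ generalization. Your explicit treatment of the decision version and NP membership is in fact slightly more complete than the paper's own proof, which only argues hardness, but the core reduction is identical.
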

\begin{proof}
We prove the $claim$ by a reduction from the {\it Densest at least $k$ subgraph (DalkS)} problem defined in ~\cite{KS}. An instance of {\it DalkS} consists of a graph $G({\cal X}, E)$, and a constant $k$, and the solution is a maximum density subgraph with at least $k$ nodes. We transform it into an instance of {\it Density-sTF} problem by defining a skill $a$ for every node $v \in V$ in which case a solution would be a maximum density subgraph with at least $k$ nodes that have skill $a$. And since skill $a$ is defined for every node in $G$, it is easy to see that ${\cal X'} \subseteq {\cal X} $ is the solution to the problem {\it Density-sTF}  iff it is a solution to the problem {\it DalkS}. The problem {\it Density-sTF} is a special case of {\it Density-mTF} which implies that {\it Density-mTF} is NP-hard. 
\end{proof}

\subsection{3-approximation algorithm for {\it Density-sTF}}
\label{subsec:density-sTF}
{\it Intuition}: To begin with, the algorithm {\it s-DensestAlk} (Algorithm ~\ref{algo:sDlk}) accepts the graph and the skill requirements as an input. It then finds the densest subgraph and removes it from the input graph and adds it to the solution subgraph (which is initially empty). It then checks if the solution subgraph satisfies the skill  requirements. Until the solution subgraph constructed meets the skill requirements, the algorithm continues to iterate through the process of finding the densest subgraph from the remaining input graph and adding it to the solution subgraph. Since in each iteration the algorithm adds the densest subgraph, it is ensured that the solution subgraph has sufficiently high density. Note that although we are able to prove that the algorithm guarantees a $3$-approximation ratio in terms of density, no bound on the size is guaranteed. We overcome this drawback by applying various simple heuristic algorithms which are described later in the section ~\ref{subsec:heuristic}.

{\it Details}: The algorithm {\it s-DensestAlk(G, {\cal T})}  takes as input the social graph $G$ and a task ${\cal T} = \{$<$a, k$>$\}$ where at least $k$ individuals/nodes of skill $a$ are required to perform the task ${\cal T}$. As explained intuitively, the algorithm then proceeds through multiple iterations. In each iteration, $i$, it finds the maximum density subgraph of $G_i$, say $H_{i+1}$, removes it from $G_i$ using the routine $shrink(G_i, H_{i+1})$ and constructs a new solution subgraph $D_{i+1}$ using the routine $union(D_i, H_{i+1})$. The routine $shrink(G, H)$ removes $H$ from $G$ such that for each $v \in (G - H)$, if $v$ has $l$ edges to the vertices in $H$, then it adds $l$ self-loops to $v$ with the corresponding edge-weights. Inside the routine $union(D, H)$, then for each loop, we look at its corresponding edge, say $e(u, v)$, in the original input graph, $G$, and if $u \in D, v\in H$ (or vice-versa), we replace the loop by an edge $e(u, v)$. Finally, once the loop-termination condition is satisfied, the algorithm then examines each of the intermediate solution subgraphs, $D_i$, constructed in previous iterations and adds sufficient number of skilled nodes to it so that each $D_i$ satisfies the skill requirement. The algorithm then picks up the one with the highest density as the final solution subgraph. 

Our algorithm is very similar to the {\it DensestAtleastK} algorithm in ~\cite{KS} that calculates the maximum density subgraph containing at least $k$ vertices without any skill constraints imposed. The naive extension would be to just add $k$ skilled nodes to the solution returned by algorithm {\it DensestAtleastK}. And since their algorithm guarantees an approximation factor of $2$ for density, this naive extension would guarantee an approximation factor of $4$ (proof omitted for brevity). But, since the additional $k$ nodes are picked at random the solution may suffer from many disconnected components making it practically infeasible to be of any use. Therefore, we propose the  algorithm {\it s-DensestAlk} that differs mainly in the loop-termination condition imposed. This condition ensures that the resulting solution satisfies the constraints of at least $k$ skilled nodes, improves the approximation ratio to $3$ from $4$, and has good connectivity properties.

Although the proof for $4$-approximation is simple, it turns out that proving a 3-approximation to {\it Density-sTF} is significantly harder. While the algorithm is simple, the analysis is fairly detailed. The key idea is to consider various cases about the returned subgraph and carefully examine the density of each component. The analysis is similar to~\cite{KS} at the high level. However, due to the skill-set constraints, several sub-cases need to be considered.
\begin{algorithm}[]
\caption{s-DensestAlk($G, {\cal T}$)}
\label{algo:sDlk}
\begin{algorithmic}[1]
\STATE $D_0  \leftarrow \phi, \ G_0 \leftarrow G, \ i \leftarrow 0$
\WHILE{ $|D_i \cap S(a)| < k$ where ${\cal T} = \{$<$a, k$>\}}
\STATE $H_{i + 1} \leftarrow$ maximum-density-subgraph$(G_i)$
\STATE $D_{i + 1} \leftarrow union(D_i, H_{i + 1})$
\STATE $G_{i + 1} \leftarrow shrink(G_i,  H_{i + 1})$
\STATE $i \leftarrow i + 1$
\ENDWHILE
\FOR {$each \ D_i$}
\STATE $n_a =$ number of nodes of skill $a$ in $D_i$
\STATE Add $max(k - n_a, 0)$ nodes of skill $a$ to $D_i$ to form $D'_i$
\ENDFOR
\STATE Return $D'_i$ which has the maximum density
\end{algorithmic}
\end{algorithm}
\begin{theorem}
The algorithm {\it s-DensestAlk} achieves an approximation factor of 3 for the {\it Density-sTF} problem.
\end{theorem}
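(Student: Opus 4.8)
Write $O$ for the vertex set of an optimal feasible team and $\rho^* = d(G[O])$ for its density, so $|O \cap S(a)| \ge k$. Suppose the while-loop runs for $t$ iterations, producing densest subgraphs $H_1,\ldots,H_t$, and set $n_i = |V(H_i)|$ and $w_i = d_{G_{i-1}}(H_i)$, the density of the $i$-th extracted piece measured inside the shrunk graph $G_{i-1}$. The first thing I would establish is a weight-accounting identity: because $shrink$ converts every edge crossing into an already-removed part into a self-loop of equal weight, and $union$ reinstalls exactly these as genuine edges, no weight is lost or double-counted, so $W(D_t) = \sum_{i=1}^t w_i n_i$ while $|V(D_t)| = \sum_{i=1}^t n_i$. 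Hence $d(D_t)$ is the $n_i$-weighted average of the $w_i$. Next I would prove the monotonicity $w_1 \ge w_2 \ge \cdots \ge w_t$: for each $i$, the set $V(H_i)\cup V(H_{i+1})$ is a subgraph of $G_{i-1}$ whose $G_{i-1}$-weight is exactly $w_i n_i + w_{i+1} n_{i+1}$ (again by the self-loop bookkeeping), and since $H_i$ is densest in $G_{i-1}$ its density dominates this average, which rearranges to $w_{i+1}\le w_i$. Monotonicity yields $d(D_i)\ge w_i$ for every $i$.

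Two global bounds drive the argument. Since $G[O]$ is an induced subgraph of $G$ and $H_1$ is the densest subgraph of $G$, we get $w_1 \ge \rho^*$. More delicately, at the terminating iteration the set $O\setminus V(D_{t-1})$ is nonempty (it must contain a skilled node, because $D_{t-1}$ has fewer than $k$ skilled nodes whereas $O$ has at least $k$) and is a subgraph of $G_{t-1}$; its $G_{t-1}$-weight counts all internal $O$-edges together with the self-loops recording $O$-edges that cross into $D_{t-1}$, hence is at least $W(G[O]) - W(D_{t-1})$. Dividing by $|O\setminus V(D_{t-1})|\le |O|$ and using that $H_t$ is densest in $G_{t-1}$ gives the key inequality $w_t \ge \rho^* - W(D_{t-1})/|O|$. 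Intuitively this says that whatever density of the optimum has not yet been ``used up'' by the accumulated weight $W(D_{t-1})$ is still available to the last extracted piece.

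With these in hand I would run a case analysis, always comparing against the candidate $D_t$ (which is already feasible and needs no augmentation) and the augmented early solutions $D_i'$, using that an augmentation of $D_i$ adds at most $k\le |O|$ skilled nodes and never decreases weight. In the benign regime where $W(D_{t-1}) \le \tfrac{2}{3}\rho^*|O|$, the key inequality forces $w_t \ge \tfrac{1}{3}\rho^*$, and since $d(D_t)\ge w_t$ the feasible $D_t$ already certifies a $3$-approximation. An equally easy bound covers the case $n_1 \ge k$: there the augmentation of $H_1$ adds at most $k\le n_1$ nodes, at most doubling the size, so $d(D_1')\ge w_1/2 \ge \rho^*/2$. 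The remaining regime, where a large fraction of the optimum's weight has been accumulated before termination yet the densest initial piece is small, is where the analysis becomes delicate: one must select the right intermediate index $i$ and bound how much the mandatory skilled-node augmentation dilutes $d(D_i)$, trading $|V(D_i)|$ against the at most $k$ added vertices and against $|O|$.

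Controlling this dilution while still beating $\tfrac{1}{3}\rho^*$ is the main obstacle, and it is exactly what forces the several sub-cases flagged before the statement (distinguishing, for instance, whether $|V(D_{t-1})|$ is smaller or larger than $|O|$, and whether the skilled nodes of $O$ are concentrated in the late or the early pieces). It is also precisely where the improvement over the naive factor-$4$ bound originates, since that weaker argument wastefully augments only the final \emph{DensestAtleastK} output rather than exploiting every intermediate $D_i$ together with the monotonicity of the $w_i$.
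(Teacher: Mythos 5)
Your scaffolding is sound: the self-loop bookkeeping identity $W(D_t)=\sum_i w_i n_i$, the monotonicity $w_1\ge w_2\ge\cdots\ge w_t$ (hence $d(D_i)\ge w_i$), the bound $w_1\ge\rho^*$, and the key inequality $w_t\ge \rho^*-W(D_{t-1})/|O|$ are all correct, and your two easy regimes (the "benign" regime handled by $D_t$, and $n_1\ge k$ handled by $D_1'$) are argued correctly. But the proposal is not a proof: the entire hard case --- large accumulated weight with a small first piece --- is left as a description of what ought to be done ("select the right intermediate index $i$ and bound the dilution"), and you yourself flag it as "the main obstacle." That case is exactly where the theorem lives. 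The paper resolves it by working with a different, sharper quantity: not the total weight $W(D_j)$ but the \emph{captured optimal weight} $W(D_j\cap H^*)$. It takes $l'$ to be the first index where this crosses $W(H^*)/2$ (so $d(D_{l'})\ge d^*/2$), splits on $|V(D_{l'})|$ versus $k$, and in the genuinely hard sub-case ($k/2<|V(D_{l'})|<2k$ with $|V(D_{l'})\cap V(H^*)|<|V(H^*)|/2$ and $d(D_{l'})<d^*$) decomposes $D_{l'}$ into $D_{i1}=D_i\cap H^*$, $D_{i2}$, and $X$, proving separate weight/density bounds for each piece (its Claims 1--5, including a per-vertex degree argument giving $density(D_{i2})>d^*/2$), and only then recombines. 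It also exploits, in one sub-case, that the augmenting skilled nodes can be drawn from $H^*$ itself, so at most $|V(H^*)|/2$ nodes need be added rather than $k$. None of these ingredients appears in your plan.

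There is also a structural reason to doubt your plan can be completed as stated. Because you track the total weight $W(D_i)$ rather than $W(D_i\cap H^*)$, the natural way to finish is a threshold/balancing argument: take the first index $i$ with $W(D_i)>c\rho^*|O|$, getting $d(D_i)\ge(1-c)\rho^*$ and $W(D_i)\ge c\rho^*|O|$, and bound $d(D_i')\ge\max\bigl(\frac{c\rho^*|O|}{|V(D_i)|+k},\,\frac{(1-c)\rho^*\,|V(D_i)|}{|V(D_i)|+k}\bigr)$. In the worst case $|O|=k$ and $|V(D_i)|=\frac{c}{1-c}k$, this maximum equals $c(1-c)\rho^*$, which is at most $\rho^*/4$ for every choice of $c$ --- i.e., your machinery saturates at exactly the naive factor $4$ that the theorem is meant to beat. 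Getting to $3$ genuinely requires the finer handles the paper uses (intersection weight, augmentation from within $H^*$, and the three-way decomposition of the accumulated solution), so the missing case is not a routine gap but the substantive content of the result.
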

\begin{proof}
\setcounter{equation}{0}
Let $H^*$ denote an optimal solution and $d^* =  \frac{W(H^*)}{|V(H^*)|}$ denote density of the optimal solution. 

If the number of iterations is 1, then $H_1$ is the maximum density subgraph that contains at least $k$ nodes of skill $a$. Therefore, $H^* = H_1$ and the algorithm returns it. Otherwise, say the algorithm iterates for $l \ge 2$ rounds. There can be two cases:

\noindent {\bf Case 1:} There exists an $l' < l$ such that \\
$W(D_{l' - 1} \cap H^*) < \frac{W(H^*)}{2}$  and  $W(D_{l'} \cap H^*) \ge \frac{W(H^*)}{2}$.

\begin{figure}[t]
\begin{center}
\includegraphics[scale=0.25, bb = 30 300 500 700]{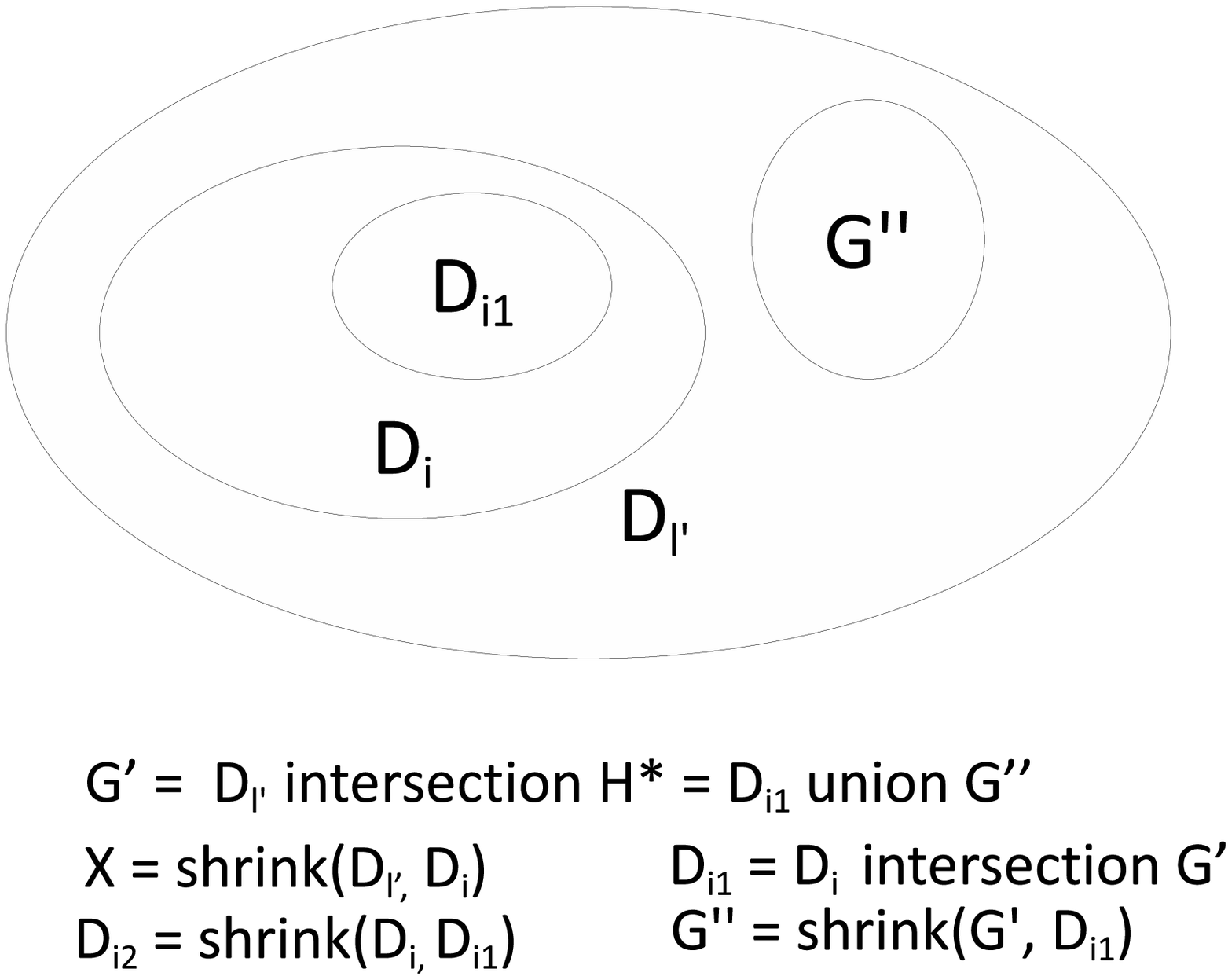}
\caption{$D_{l'} = D_{i1} \cup D_{i2} \cup X$}\label{fig:3approx}
\end{center}
\end{figure}

\noindent {\bf Case 2:} There exists no such $l' < l$.

Before analyzing the two cases in detail, note that by construction
$density(H_i) \le density(D_i) \le density(D_{i - 1})$. We now consider case 2 first and later case $1$.

\noindent {\bf Proof for Case 2.} 

Since the algorithm terminates after $l$ iterations, $D_l$ contains at least $k$ nodes of skill $a$. Further, we know that for each $j \le l - 1, W(D_j \cap  H^*)  < \frac{W(H^*)}{2}$ \\ 
$\Rightarrow W(G_j \cap  H^*)  \ge \frac{W(H^*)}{2}$ \\ 
$\Rightarrow \frac{W(G_j \cap H^*)}{|V(G_j \cap H^*)|} \ge \frac{W(H^*)}{2 |V(H^*)|}$ \\ 
$\Rightarrow G_j$ contains a subgraph of density $\ge \frac{d^*}{2}$ \\ 
$\Rightarrow density(H_l) \ge \frac{d^*}{2}$ \\ 
$\Rightarrow density(D_l) \ge \frac{d^*}{2}$

Thus, $D_l$ has density $\ge \frac{d^*}{2}$ and contains at least $k$ nodes of skill $a$. Therefore, the algorithm indeed returns a subgraph of density at least $\ge \frac{d^*}{2}$.

\noindent{\bf Proof for Case 1}

\noindent $W(D_{l' - 1} \cap H^*) < \frac{W(H^*)}{2}$  and  $W(D_{l'} \cap H^*) \ge \frac{W(H^*)}{2}$\\
$\Rightarrow W(G_{l'} \cap H^*) \ge \frac{W(H^*)}{2}$ where $G_{l'} = shrink(G, D_{l' -1})$ \\ 
$\Rightarrow \frac{W(G_{l'} \cap H^*)}{|V(G_{l'} \cap H^*)|} \ge \frac{W(H^*)}{2 |V(H^*)|} = \frac{d^*}{2}$ \\ 
$\Rightarrow G_{l'}$ has a subgraph of density $\ge \frac{d^*}{2}$ \\ 
$\Rightarrow density(H_{l'}) \ge \frac{d^*}{2}$  ($H_{l'}  \mbox { is densest subgraph of } G$) \\ 
$\Rightarrow density(D_{l'}) \ge \frac{d^*}{2}$

Now, let us divide {\bf Case 1} into following $4$ parts 
\begin{enumerate}[(a)]
\item $|V(D_{l'})| \le  \frac{k}{2}$ \\
According to step $10$, algorithm adds at most $k$ vertices to $D_{l'}$ to obtain the subgraph, say $D$, with density $d$\\
$d \ge \frac{W(D_{l'})}{|V(D_{l'})| + k} \ge \frac{\frac{W(H^*)}{2}}{\frac{k}{2} + k} \ge \frac{\frac{W(H^*)}{2}}{\frac{3 |V(H^*)|}{2}} = \frac{d^*}{3}$ 

\item $|V(D_{l'})| \ge  2k $

According to step $10$, algorithm adds at most $k$ vertices to $D_{l'}$. Further, we know that $density(D_{l'}) \ge \frac{d^*}{2}$ therefore, the resulting subgraph, $D'_{l'}$ has density 

\noindent $d = \frac{W(D_{l'})}{|V(D_{l'})| + k} \ge \frac{W(D_{l'})}{\frac{3}{2} |V(D_{l'})|}  \ge  \frac{d^*}{3}$

\item $\frac{k}{2} < |V(D_{l'})| <  2k$ and $ |V(D_{l'}) \cap V(H^*)| \ge \frac{|V(H^*)|}{2}$

According to step $10$, algorithm adds at most $\frac{|V(H^*)|}{2}$ nodes to $D_{l'}$ to form $D'_{l'}$ with density, say $d$.

\begin{enumerate}[i]
\item $|V(D_{l'})| \ge |V(H^*)|$ \\
$d \ge \frac{W(D_{l'})}{|V(D_{l'})| + \frac{|V(H^*)|}{2}} \ge \frac{W(D_{l'})}{|V(D_{l'})| + \frac{|V(D_{l'})|}{2}}  \ge \frac{d^*}{3}$ 

\item $|V(D_{l'})| < |V(H^*)|$ \\
$d \ge \frac{W(D_{l'})}{|V(D_{l'}|) + \frac{|V(H^*)|}{2}} \ge \frac{W(D_{l'})}{|V(H^*)| + \frac{|V(H^*)|}{2}} \ge  \frac{\frac{W(H^*)}{2}}{\frac{3}{2}|V(H^*)|} \ge \frac{d^*}{3}$
\end{enumerate} 

\item $\frac{k}{2} < |V(D_{l'})| <  2k$ and $|V(D_{l'}) \cap V(H^*)| < \frac{|V(H^*)|}{2}$

If $d_{l'} = density(D_{l'}) \ge d^*$, then adding at most $k$ vertices gives a subgraph $D'_{l'}$ with density, say $d$ such that

\noindent $d = \frac{W(D_{l'})}{ |V(D_{l'})| + k} \ge \frac{W(D_{l'})}{|V(D_{l'})| + 2 |V(D_{l'})|}  \ge \frac{W(D_{l'})}{3 |V(D_{l'})|} \ge \frac{d^*}{3}$

Therefore, $D_{l'}$ is a subgraph that contains at least $k$ nodes of skill $a$ and has density $d \ge \frac{d^*}{3}$. We are done here.

Now, assume that $d_{l'} < d^*$.

In the rest of the proof, we divide $D_{l'}$ into subgraphs as explained below and shown in Figure~\ref{fig:3approx}. 

$\mbox{Let } G' = D_{l'} \cap H^* $. 

\begin{claim}
\label{claim:case2Claim1}$W(G')\ge \frac{W(H^*)}{2}$ and $density(G')\ge d^*$.
\end{claim}
\begin{proof}
$|V(G')| = |V(D_{l'} \cap H^*)| < \frac{|V(H^*)|}{2} $ and $W(G') = W(D_{l'} \cap H^*) \ge  \frac{W(H^*)}{2} $. \\
$\Rightarrow density(G') \ge \frac{\frac{W(H^*)}{2}}{\frac{|V(H^*)|}{2}} \ge d^* $.
\end{proof}

Define $i$ such that $density(H_i)\ge d^*$ and $density(H_{i+1}) < d^*$. Such an $i\le l'$ exists due to {\sc Claim ~\ref{claim:case2Claim1}} and since $d_{l'} < d^*$.\\ 

$\Rightarrow density(D_i)  = d_i \ge d^*$. 

Let, $n_i = |V(D_i)| $. We now consider two sub-cases. 

\begin{enumerate}[i]
\item $n_i \ge \frac{|V(H^*)|}{2}$: Add at most $k$ vertices to $D_i$ to get a subgraph $D'_i$ with $density(D'_i) = d$, such that  \\
$d = \frac{W(D_i)}{|V(D_i)| + k} \ge \frac{W(D_i)}{|V(D_i)| + |V(H^*)|} \ge \frac{W(D_i)}{3 |V(D_i)|} \ge \frac{d^*}{3}$. \\ 
Thus, $D'_i$ is a subgraph containing at least $k$ nodes of skill $a$ and density $d \ge  \frac{d^*}{3}$ and we are done here.

\item $n_i < \frac{|V(H^*)|}{2}$:  
We know that $density(G') \ge  d^*$, $density(H_i) \ge d^*$ and $density(H_{i+1}) < d^*$.  Therefore, $G' \cap D_i \ne \phi$.
We now introduce a few definitions and prove claims about them. 

Let, $D_{i1} = D_i \cap G'$, $D_{i2} = shrink(D_i, D_{i1})$, and $G'' = shrink(G', D_{i1})$ (Figure: ~\ref{fig:3approx1}). Further, let $X = shrink(D_{l'}, D_i)$. \\

\begin{figure}[t]
\begin{center}
\includegraphics[scale=0.5, bb = 30 300 500 500]{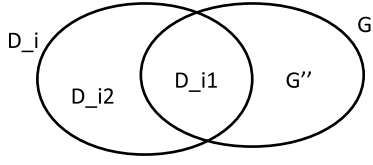} 
\caption{$D_{i1} = D_i \cap G'$, $D_{i2} = shrink(D_i, D_{i1})$, $G'' = shrink(G', D_{i1})$}\label{fig:3approx1}
\end{center}
\end{figure}


\begin{claim} 
\label{claim:case2Claim2}$W(D_{i1}) \ge  \frac{|V(H^*)| d^*}{2} - W(G'')$. 
\end{claim}
\begin{proof}
$W(G') = W(G'') + W(D_{i1})$ since $G'' = shrink(G', D_{i1})$; but $W(G')\ge \frac{W(H^*)}{2}$ (using {\sc Claim ~\ref{claim:case2Claim1}})
\end{proof}

\begin{claim}
\label{claim:case2Claim3} $density(D_{i2}) > \frac{d^*}{2}$.
\end{claim}
\begin{proof} 
Recall that for each $j \le i, density(H_j) > d^*$. Further, $H_j = \mbox{ densest subgraph of } shrink(G, D_{j-1})$.
Therefore, for each $v\in H_j$, the degree of $v$ induced in $H_j$ is at least $d^*$. Therefore, for all $v\in D_i$, $degree(v) > d^*$ (here we abuse notation to denote $v$'s degree induced in $D_{i}$ by $degree(v)$).
\end{proof}

For convenience, let $n_{x} = |V(X)|$, $n_{l'} = |V(D_{l'})|$, $n_{i1} = |V(D_{i1})|$, $n_{i2} = |V(D_{i2})|$, and $n'' = |V(G'')|$.

\begin{claim}
\label{claim:case2Claim4}$W(X) - W(G'') \ge \frac{d^*}{2} (n_{x} - n'')$.
\end{claim}
\begin{proof}
Since $H_{l'}$ is the maximum density subgraph of $shrink(G, D_{l'-1})$, $density(H_{l'}) \ge density(S)$ for any $S \subseteq H_{l'},$. 
Further, since $X = shrink(D_{l'}, D_i)$, and $G'' = shrink(G', D_i \cap G')$, we have  $G'' \subseteq X$.
Therefore, $density(H_j) \ge density(H_j \cap G'')$ (for all $i < j \le l'$). \\
Therefore, $W(X) - W(G'') $\\
$= \sum_{j=i+1}^{l'}{W(H_j)} - \sum_{j=i+1}^{l'}{W(H_{j} \cap G'')}$\\
$\ge \sum_{j=i+1}^{l'}{density(H_{j}) (\mid H_{j} \mid - \mid H_{j} \cap G'' \mid)}$\\
$\ge \frac{d^*}{2} (n_x - n'')$.
\end{proof}

Notice that we have (lower) bounded the density or the weight of each of $D_{i1}$, $D_{i2}$, and $X$, the three components that add up to $D_{l'}$. We are now ready to argue about the density of $D_{l'}$ when $k$ vertices are added to it. Before initiating this analysis, we briefly state a claim relating the sizes of these components. 

\begin{claim}
\label{claim:case2Claim5}$n_{i2} + n_x - n''\ge n_{l'} - \frac{|V(H^*)|}{2}$
\end{claim}
\begin{proof}
This follows using $|V(G')|  \le \frac{|V(H^*)|}{2}$ and the definition $G'' = shrink(G', D_{i1})$.
\end{proof}

We now complete the analysis. \\  

\noindent $d = density(D) \ge \frac{W(D_{l'})}{n_{l'} + k}$ \\
$= \frac{W(D_i) + W(X)}{n_{l'} + k} = \frac{W(D_{i1}) + W(D_{i2}) + W(X)}{n_{l'} + k}$ \\ 
$\ge \frac{\frac{d^* |V(H^*)|}{2} - W(G'') + \frac{d^*n_{i2}}{2} + W(X)}{n_{l'} + k}$ (using {\sc Claim ~\ref{claim:case2Claim2},~\ref{claim:case2Claim3}})\\ 
$\ge \frac{\frac{d^* |V(H^*)|}{2} + \frac{d^*n_{i2}}{2} + \frac{d^*}{2}(n_x - n'')}{n _{l'}+ k}$ (using {\sc Claim ~\ref{claim:case2Claim4}})\\ 
$\ge \frac{d^*}{2} \frac{|V(H^*)| + n_{l'} - \frac{|V(H^*)|}{2}}{n_{l'} + k}$ (using {\sc Claim ~\ref{claim:case2Claim5}}) \\ 
$\ge \frac{d^*}{4} \frac{2n_{l'} + k}{n_{l'} + k} \ge \frac{d^*}{3}$ (since $\frac{k}{2} < n_{l'}$).
\end{enumerate}
\end{enumerate} 
Remark: Cases (c) and (d) do not use the bound $|V(D_{l'}) < 2k|$; so they together subsume case (b), but we have presented (b) for clarity.
\end{proof}

\subsection{Algorithm for {\it Density-mTF}}
In this section, we present the algorithm {\it m-DensestAlk} (Algorithm ~\ref{algo:mDlk}) for the {\it Density-mTF} problem. This is an extension of the algorithm {\it s-DensestAlk} for the {\it Density-sTF} problem described earlier. The algorithm {\it m-DensestAlk} accepts input parameters: graph $G$ and  task ${\cal T} = \{<a_1, k_1>, <a_2, k_2>, \ldots, <a_m, k_m>\}$  which requires at least $k_i$ individuals of skill $a_i$ to perform the task $\cal T$. Each iteration within the algorithm {\it m-DensestAlk} is exactly similar to the {\it s-DensestAlk} described earlier except that here the termination condition verifies that the solution subgraph contains at least $k_i$ nodes with skill $a_i$ for $i \in \{ 1 \cdot \cdot m \}$ and thus satisfying the {\it multiple} skill requirement instead of {\it single} skill requirement. The details of the algorithm are similar to that described for {\it s-DensestAlk} in the section ~\ref{subsec:density-sTF}.
\begin{algorithm}[]
\caption{m-DensestAlk($G, {\cal T}$)}
\label{algo:mDlk}
\begin{algorithmic}[1]
\STATE $D_0  \leftarrow \phi, \ G_0 \leftarrow G, \ i \leftarrow 0$
\WHILE{ $|D_i \cap S(a_j)| < k_j$ for any $<a_j, k_j> \in {\cal T}$}
\STATE $H_{i + 1} \leftarrow$ maximum-density-subgraph$(G_i)$
\STATE $D_{i + 1} \leftarrow union(D_i, H_{i + 1})$
\STATE $G_{i + 1} \leftarrow shrink(G_i,  H_{i + 1})$
\STATE $i \leftarrow i + 1$
\ENDWHILE
\FOR {$each \ D_i$}
\STATE $D'_i \leftarrow D_i$
\FOR {$each \ <a_1, k_1> \in {\cal T}$}
\STATE $n_{aj} =$ number of nodes of skill $a_j$ in $D_i$
\STATE Add $max(k_j - n_{aj}, 0)$ nodes of skill $a_j$ to $D'_i$
\ENDFOR
\ENDFOR
\STATE Return $D'_i$ which has the maximum density
\end{algorithmic}
\end{algorithm}
\begin{theorem}
The algorithm {\it m-DensestAlk} achieves an approximation factor of 3 for the special case of {\it Density-mTF} problem where each node in the graph has at most one skill.
\end{theorem}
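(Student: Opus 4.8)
The plan is to show that the case analysis proving the $3$-approximation for \textit{Density-sTF} transfers essentially verbatim to \textit{m-DensestAlk}, once we make the single substitution $k \mapsto K := \sum_{j=1}^{m} k_j$. The reason the hypothesis that each node carries at most one skill is indispensable is that it forces the support sets $S(a_1), \ldots, S(a_m)$ to be pairwise disjoint, and this disjointness is exactly what lets $K$ play the structural role that $k$ played in the single-skill proof.

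First I would record the two facts that drive the argument. \emph{(1) Lower bound on the optimum.} Let $H^*$ be an optimal solution with $d^* = W(H^*)/|V(H^*)|$. Since $H^*$ contains at least $k_j$ distinct nodes of each $S(a_j)$ and these sets are pairwise disjoint, the corresponding $\sum_j k_j$ nodes are all distinct, so
\begin{equation*}
|V(H^*)| \;\ge\; \sum_{j=1}^{m} k_j \;=\; K .
\end{equation*}
\emph{(2) Upper bound on padding.} In the completion loop (lines 8--14) we add $\max(k_j - n_{a_j}, 0)$ nodes of skill $a_j$ to each intermediate $D_i$; because a node added for skill $a_j$ has no other skill, the sets of added nodes for distinct skills are disjoint, so the total number of nodes appended to any $D_i$ is at most $\sum_j k_j = K$. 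Moreover the while-loop termination condition guarantees that the final iterate $D_l$ already contains at least $k_j$ nodes of each skill, so it needs no padding --- this is what makes Case~2 go through unchanged.

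With these facts in hand I would replay the previous proof. Almost every inequality there involves only $W(\cdot)$, $|V(\cdot)|$ and $d^*$, none of which is affected by having several skills; the integer $k$ enters in just three ways: as the number of padded nodes, through the comparison $|V(H^*)| \ge k$, and (in sub-case (c)) through the number of skilled nodes of $H^*$ lying in the overlap $V(D_{l'}) \cap V(H^*)$. Replacing $k$ by $K$ and invoking facts (1) and (2) handles the first two uniformly: for example the chain in sub-case (a) becomes $d \ge \frac{W(H^*)/2}{K/2 + K} = \frac{W(H^*)/2}{3K/2} \ge \frac{d^*}{3}$, with identical arithmetic, and sub-cases (b) and (d) follow the same way. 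For the third use, the refined claim of sub-case (c) --- that at most $|V(H^*)|/2$ nodes are added --- is recovered by an aggregate counting argument: writing $m_j$ for the number of skill-$a_j$ nodes in $H^*$ (so $m_j \ge k_j$), disjointness makes $\sum_j m_j$ the \emph{total} skilled-node count of $H^*$, and bounding the padding $\sum_j (k_j - n_{a_j})^+$ by $\sum_j m_j - \sum_j o_j$, where $o_j$ counts skill-$a_j$ nodes in the overlap, yields exactly the bound $|V(H^*)|/2$ as in the single-skill case.

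I expect the main obstacle to be this aggregate bookkeeping in sub-cases (c) and (d): one must verify that bounding skilled nodes skill-by-skill and then summing reproduces the single-skill estimates, which is precisely where pairwise disjointness of the $S(a_j)$ is used. The conceptual point is captured by fact (1): were a node permitted two skills, a single vertex of $H^*$ could discharge two requirements at once, $|V(H^*)|$ could fall below $K$, and the sub-case (a) bound would collapse --- which is exactly why the theorem is stated only for the at-most-one-skill special case.
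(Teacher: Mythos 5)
Your proposal is correct and follows essentially the same route as the paper's own proof, which simply sets $k=\sum_{j}k_j$, observes that the at-most-one-skill hypothesis forces $|V(H^*)|\ge k$, and declares the rest analogous to the \textit{s-DensestAlk} analysis with per-skill padding replacing single-skill padding. In fact your write-up is more careful than the paper's: you explicitly verify the three places where $k$ enters the single-skill argument (padding size, the bound $|V(H^*)|\ge k$, and the overlap bookkeeping in sub-case (c)), details the paper leaves implicit under the word ``analogous.''
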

\begin{proof}
Let $m =\ \mid${\cal T}$\mid$ and $k = \sum_{j=1}^{m}{k_j}$ where $k_j$ number of individuals are required of skill $a_j$ s.t. $<a_j,k_j> \in {\cal T}$. Since each node contributes to atmost one skill, an optimal solution, $H^*$, has at least $k$ vertices. The proof for {\it m-DensestAlk} is analogous to the proof for {\it s-DensestAlk} with the only difference that instead of adding any $k$ nodes of skill $a$ to $D_i$s, we add $k_j$ nodes of skill $a_j$ s.t. $<a_j,k_j> \in {\cal T}$.
\end{proof}

We are unable to bound the performance of {\it m-DensestAlk} for the general case of {\it Density-mTF} problem. Futher, the time complexity of {\it m-DensestAlk} is $O(k n^3)$ which can be inefficient for very large graphs but is manageable at the scale at which we run experiments. Directly using the linear time algorithm for the densest at least $k$ subgraph problem in~\cite{KS,AC} or $O(n^3)$-time algorithm from~\cite{KS,AC} for {\it Density-sTF} problem would result in a weaker bound i.e. $6$ and $4$-approximation respectively. In both cases, however, one may possibly get many disconnected components.


\section{Diameter-based objective}
\label{sec:theory2}
In this section, we mention theoretical results for {\it Diameter-sTF} and {\it Diameter-mTF}. We show that these problems are NP-hard (note that the NP-hardness of {\it Diameter-sTF} does not follow from any previous work). We further present an algorithm {\it MinDiameter} (Algorithm ~\ref{algo:minDia})  which is an extension of {\it RarestFirst} in~\cite{LLT}, and prove that it achieves a 2-approximation factor.

\begin{algorithm}[]\label{algo:minDia}
\caption{MinDiameter(G, {\cal T})} 
\label{algo:minDia}
\begin{algorithmic}[1]
\FOR{each $<a, k> \in T$}
\STATE $S(a) = \{ i \mid a \in X_i \}$
\ENDFOR
\STATE $a_{rare} = \arg \min_{<a, k>\ \in T} |S(a)|$
\FOR{each $i \in S(a_{rare})$}
\FOR{each $<a, k>\ \in T$}
\STATE $R_{ia} = d_k(i, S(a), k)$
\ENDFOR
\STATE $R_i \leftarrow \max_a R_{ia}$
\ENDFOR
\STATE $i^* \leftarrow \arg \min R_i$
\STATE ${\cal X'} = \{ i^* \}$
\FOR{each $<a, k>\ \in T$}
\STATE ${\cal X'} = {\cal X'} \cup \{  Path_k(i^*, S(a), k) \}$
\ENDFOR
\end{algorithmic}
\end{algorithm}

\begin{theorem} 
{\it Diameter-sTF} and {\it Diameter-mTF} problems are NP-complete.
\end{theorem}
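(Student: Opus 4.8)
The plan is to first dispatch NP membership, then prove NP-hardness of \emph{Diameter-sTF} by a reduction from \textsc{Clique}, and finally obtain \emph{Diameter-mTF} for free since \emph{Diameter-sTF} is its single-skill special case. Throughout I work with the decision versions, which ask whether a feasible team whose induced diameter is at most a given threshold exists.

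For membership in NP, I would use the obvious certificate: a candidate team ${\cal X'}$. Given ${\cal X'}$ one checks the skill constraints $|{\cal X'} \cap S(a_j)| \ge k_j$ and computes all-pairs shortest paths inside $G[{\cal X'}]$ to verify the diameter is within the target threshold, all in polynomial time; hence both problems lie in NP.

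For hardness I would reduce from \textsc{Clique}. From an instance $(H,k)$ I would build a social network whose nodes are $V(H)$, each carrying the single skill $a$ (so $S(a) = V(H)$), on a \emph{complete} weighted graph $G$ in which edges of $H$ get weight $1$ and non-edges get weight $2$; the task is $\langle a, k\rangle$ and I would set the diameter threshold to $1$. The key equivalence I would establish is that a selected subset has diameter $1$ iff it is a clique of $H$: every edge has weight at least $1$, so diameter $1$ forces every selected pair to be joined directly by a weight-$1$ edge, whereas if some pair is a non-edge of $H$ then its direct weight is $2$ and any detour through another selected node costs at least $1+1=2$, pinning its distance at exactly $2$. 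Since every node possesses skill $a$, the constraint $|{\cal X'}\cap S(a)|\ge k$ reduces to $|{\cal X'}|\ge k$, and because cliques are hereditary, $H$ has a $k$-clique iff the constructed instance admits a feasible team of diameter $1$. This is a polynomial-time reduction, establishing NP-hardness of \emph{Diameter-sTF}.

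I expect the main obstacle to be exactly the point that distinguishes this from the $k=1$-per-skill hardness in~\cite{LLT}: the diameter is measured \emph{inside} the induced subgraph, so a selected intermediate node can short-circuit a non-adjacent pair and spuriously lower the diameter. The $1/2$ weighting on a complete graph is what I would use to neutralize this, since it makes diameter-$1$ coincide exactly with cliqueness while keeping every induced subgraph connected, thereby sidestepping the degenerate zero-weight connecting-edge convention mentioned in the notation. Finally, reading any \emph{Diameter-sTF} instance as a \emph{Diameter-mTF} instance with a single skill pair transfers the hardness upward, so together with NP membership both problems are NP-complete.
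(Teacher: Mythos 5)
Your proof is correct, but it takes a genuinely different route from the paper. The paper proves NP-hardness of \emph{Diameter-sTF} by a reduction from 3-SAT: it builds two nodes per variable ($x,\neg x$) and two per clause ($C_{j1},C_{j2}$), with three weight levels $r/2 < r < r'$ arranged so that a satisfying assignment exists iff there is a team of $k = N+2M$ skilled nodes of diameter at most $r$; the argument then needs several claims (the paper's Claims 5--8) to control how selected intermediate nodes can or cannot short-circuit distances. Your reduction from \textsc{Clique} accomplishes the same thing with far less machinery: on a complete graph with weights $1$ (edges of $H$) and $2$ (non-edges), the two-hop detour cost $1+1=2$ exactly neutralizes the short-circuiting issue, so diameter-$\le 1$ feasibility coincides with cliqueness, and the ``at least $k$'' versus ``exactly $k$'' gap is closed by heredity of cliques. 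Both constructions assign the single skill to every node, so both in fact show hardness of the pure cardinality-constrained minimum-diameter problem. One point in favor of your version: your weights $\{1,2\}$ satisfy the triangle inequality, whereas the paper's do not (e.g.\ $w(C_{j1},C_{j2}) = r' > r/2 + r/2$), so your reduction establishes NP-hardness even on metric instances --- precisely the class of inputs for which the paper's 2-approximation theorem for \emph{MinDiameter} applies --- which makes the hardness result and the approximation result line up more cleanly. Your handling of NP membership (certificate plus all-pairs shortest paths in the induced subgraph) and the lift to \emph{Diameter-mTF} as the single-skill special case match the paper.
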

\begin{proof}
The problems {\it Diameter-sTF} and {\it Diameter-mTF} are in NP because for a given candidate solution, in polynomial time, it can be verified that the skill-set requirement is satisfied. We prove that {\it Diameter-sTF} is NP-hard by reduction from the 3-satisfiability problem.  Consider a 3-SAT instance, say $\Psi = C_1 \wedge C_2 ... \wedge C_m$, where each clause, $C_j = (x \vee y \vee z)$, and $\lbrace x, y, z \rbrace \in U = \lbrace u_1, \neg u_1, u_2, \neg u_2, \cdot \cdot \cdot, u_n, \neg u_n \rbrace$. Let, $C = \lbrace  C_1, C_2, \cdot \cdot \cdot, C_m \rbrace$. Let $N, M$ denote the number of variables and clauses, respectively. We construct an instance of {\it Diameter-sTF} problem corresponding to the 3-SAT instance $\Psi$ using the following rules.

{\it Rule $1$} For each variable $x$, create two nodes $x, \neg x$ in $G$ and set $w(x, \neg x) = r'$.

{\it Rule $2$} For each clause $C_j$, create two nodes, $C_{j1}$ and $C_{j2}$ in $G$ and set $w(C_{j1}, C_{j2}) = r'$. 

{\it Rule $3$} Pick any $r$ such that $r < r'$. For each pair of variables ($x, y$) where $y \ne \neg x$, set $w(x, y) = r$. Similary, for each pair of clauses ($C_f, C_g$), where $w(C_f, C_g)$ is not set by rule $2$, set $w(C_f, C_g) = r$.

{\it Rule $4$} 
For each clause, $C_j = (x \vee y \vee z)$, set 

$w(C_{j1}, x) = w(C_{j1}, y) = w(C_{j1}, z) = \frac{r}{2}$ and 

$w(C_{j2}, x) = w(C_{j2}, y) = w(C_{j2}, z) = \frac{r}{2}$ and 

$w(C_{j1}, u) = w(C_{j2}, u) = r$ for each $u \in U - \{x, y, z\}$ 

{\it Rule $5$} For each $u_i, \neg u_i \in U$, associate a skill $a$ to node $u_i, \neg u_i$. And for each $C_j \in C$, associate a skill $a$ to the nodes $C_{j1}, C_{j2}$. 

\begin{claim}
\label{claim:minDiaClaim1} In $G$, $d(x, \neg x) > r$ where $x, \neg x \in U$.
\end{claim}
\begin{proof}
In $G$, for each variable $y (\ne x \ne \neg x), \ d(x, y) = d(\neg x, y) = r$ and $w(x, \neg x) = r' > r$ (rule $1, 3$). Further, both $x$ and $\neg x$ cannot appear together in any clause $C_j \in C$ (pre-processing). Therefore, in $G,\ d(C_{j1}, x) = d(C_{j2}, x) = \frac{r}{2}$ and $d(C_{j1}, \neg x) = d(C_{j2}, \neg x) = r$ (rule $3, 4$). \\
$\Rightarrow d(x, \neg x) > r$
\end{proof}
\begin{claim}
\label{claim:minDiaClaim2} Let $X$ be the subgraph of $G$ and $V(X)$ denote the nodes in $X$. Let $C_{j1}, C_{j2} \in V(X)$ where $C_j = (x \vee y \vee z)$. Then, in $X$, $d(C_{j1}, C_{j2}) = r$ iff $V(X) \cap \{x, y ,z\} \ne \phi$.
\end{claim}
\begin{proof}
Assume $V(X) \cap \{x, y, z\} = \phi$. \\
In $G$, for each clause $C_f (\ne C_{j1} \ne C_{j2}), \ d(C_{j1}, C_f) = d(C_{j2}, C_f) = r$ and $w(C_{j1}, C_{j2}) = r' > r$ (rule $2, 3$). Further, for each $u \in U - \{x, y , z\},\ d(C_{j1}, u) = d(C_{j2}, u) = r$ (rule $4$). Therefore, in $X,  d(C_{j1}, C_{j2}) > r$. However, this is a contradiction because, in $X, d(C_{j1}, C_{j2}) = r$. \\
$\Rightarrow V(X) \cap \{x, y ,z\} \ne \phi$.
\end{proof}
\begin{claim}
\label{claim:minDiaClaim3} Let  $k = N + 2M$. If $\Psi$ has a satisfying assignment then $G$ has a sub-graph $\cal X'$ with $|{\cal X'} \cap S(a)| \ge k$ and $diameter({\cal X'}) \le r$. 
\end{claim}
\begin{proof}
If $\Psi$ has a satisfying assignment, then $G$ has a subgraph $\cal X'$ such that $\cal X'$ contains $C_{j1}, C_{j2}$ for each clause $C_j \in C$, and $u (or \ \neg u) \in U$ if $u (or \  \neg u)$ is set to $1$ in the satisfying assignment for $\Psi$. Note that in the satisfying assignment for $\Psi$ either $u$ or $\neg u$ appears in the assignment. Thus, $\cal X'$ contains exactly $N$ variables and twice the number of clauses. Thus, $|{\cal X'} \cap S(a)| = N + 2M = k$ (rule $5$). \\
Since $\cal X'$ contains either a variable or it negation, for each pair of variables $(x, y) \in V(X) \cap U,\ d(x, y) = r$ (rule $3$). Further, in the satisfying assignment for $\Psi$ each clause $C_j = x \vee y \vee z$, has at least one of the variables set to $1$. So, for each pair of nodes $(p, q) \in V(X) \cap C, \ d(p, q) = r$ ({\sc Claim ~\ref{claim:minDiaClaim2} } and rule $3$) . Therefore, distance between any two nodes in $\cal X'$ is $r$ (rule $4$).\\
Thus, if $\Psi$ has a satisfying assignment then $G$ has a subgraph $\cal X'$ with $|{\cal X'} \cap S(a)| = k$ and $diameter({\cal X'}) = r$. 
\end{proof}
\begin{claim}
\label{claim:minDiaClaim4} Let  $k =  N + 2M$. If $G$ has a sub-graph $\cal X'$ with $|{\cal X'} \cap S(a)| \ge k$ and $diameter({\cal X'}) \le r$ then $\Psi$ has a satisfying assignment.
\end{claim}
\begin{proof}
If $diameter({\cal X'}) \le r$ then it contains either $u$ or $\neg u$ but not both because $d(u, \neg u) > r$ ({\sc Claim ~\ref{claim:minDiaClaim1}}). Since $k = N + 2M$, for each variable $u \in U$, $\cal X'$ contains a node corresponding to either $u$ or $\neg u$ (not both) and for each clause $C_j \in C$, $\cal X'$ contains nodes corresponding to $C_{j1}$ and $C_{j2}$ (rule $5$). Now, since $diameter({\cal X'}) \le r$, it implies that $d(C_{j1}, C_{j2}) \le r$. This implies that at least one of the nodes corresponding to $x, y, z$ in $C_j$ is included in the sub-graph $\cal X'$ ({\sc Claim ~\ref{claim:minDiaClaim2}}). Now, if each variable $u \in U \cap V({\cal X'})$ is set to $1$, then $\Psi$ has a satisfying assignment.
\end{proof}
{\sc Claims ~\ref{claim:minDiaClaim3}} and ~\ref{claim:minDiaClaim4} prove that {\it Diameter-sTF} is NP-hard. Since {\it Diameter-sTF} is the special case of {\it Diameter-mTF}, its NP-hardness proof follows.
\end{proof}

\begin{theorem}
For any graph distance function $d$ that satisfies the triangle inequality, the algorithm {\it MinDiameter} achieves an approximation factor of $2$ for the {\it Diameter-sTF} and {\it Diameter-mTF} problems.
\end{theorem}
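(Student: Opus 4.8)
The plan is to prove the $2$-approximation for \emph{MinDiameter} by establishing two facts about the algorithm's output: first, that the diameter of the returned set $\cal X'$ is at most twice the optimal diameter, and second, that $\cal X'$ is a feasible solution, i.e. it contains at least $k_j$ nodes of each required skill $a_j$. Feasibility is essentially immediate from the construction: for each $\langle a, k\rangle \in T$ the algorithm explicitly adds $Path_k(i^*, S(a), k)$, which I read as the $k$ closest nodes of skill $a$ to the chosen center $i^*$, so the skill constraints are satisfied by design. The substance of the theorem is therefore the approximation bound on the diameter.

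First I would fix an optimal solution $\cal O$ with diameter $R^* = diameter({\cal O})$, and let $a_{rare}$ be the rarest skill as selected in line~4. The key observation is that \emph{any} feasible solution, in particular $\cal O$, must contain at least one node of skill $a_{rare}$; call it $o \in {\cal O} \cap S(a_{rare})$. Because the outer loop (lines~5--10) ranges over \emph{every} $i \in S(a_{rare})$, the center $o$ is among the candidates examined, so the minimizing center $i^*$ chosen in line~11 satisfies $R_{i^*} \le R_o$, where $R_i = \max_a R_{ia}$ and $R_{ia} = d_k(i, S(a), k)$ denotes the distance from $i$ to its $k$-th closest node of skill $a$.

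Next I would bound $R_o$ from above by $R^*$. Since $\cal O$ is feasible, for each skill $a$ it contains at least $k$ nodes of skill $a$, and all of these lie within distance $R^*$ of $o$ (as $o \in \cal O$ and the diameter of $\cal O$ is $R^*$). Hence the $k$-th closest node of skill $a$ to $o$ in the whole graph is at distance at most $R^*$, giving $R_{oa} \le R^*$ for every $a$, and therefore $R_o = \max_a R_{oa} \le R^*$. Combining with the previous step yields $R_{i^*} \le R^*$: every node the algorithm places into $\cal X'$ is within distance $R^*$ of the center $i^*$.

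Finally I would convert this radius bound into a diameter bound using the triangle inequality. For any two nodes $p, q \in {\cal X'}$, both are within distance $R_{i^*} \le R^*$ of $i^*$, so $d(p,q) \le d(p, i^*) + d(i^*, q) \le 2R^*$, whence $diameter({\cal X'}) \le 2R^*$. This is exactly the claimed factor, and the argument is uniform over \emph{sTF} (single skill) and \emph{mTF} (multiple skills) since the loop over $\langle a,k\rangle \in T$ treats each skill identically. I expect the main obstacle to be purely definitional rather than mathematical: the bound hinges on interpreting $d_k(i, S(a), k)$ and $Path_k(i^*, S(a), k)$ correctly as ``distance to, and set of, the $k$ nearest skill-$a$ nodes,'' and on the subtle point that restricting the center search to $S(a_{rare})$ loses nothing because every feasible team must meet the rarest skill and hence contains an eligible center. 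I would make sure the write-up states explicitly why the rarest-skill restriction preserves the optimum, as that is the one place where a careless argument could fail.
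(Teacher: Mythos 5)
Your proposal is correct and takes essentially the same approach as the paper: both arguments bound every selected node's distance to the chosen center $i^*$ by the optimal diameter and then apply the triangle inequality to conclude $d(p,q) \le 2\,\mbox{Cc-R}({\cal X}^*)$ for any two team members $p, q$. The only difference is that you spell out explicitly the step the paper compresses into ``due to the way the algorithm operates,'' namely that any feasible (hence the optimal) team must contain a node $o$ of the rarest skill, that $o$ is among the candidate centers examined, and that $R_{oa} \le \mbox{Cc-R}({\cal X}^*)$ for every required skill $a$, so the minimizing center satisfies $R_{i^*} \le \mbox{Cc-R}({\cal X}^*)$.
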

\begin{proof}
The analysis we present here is similar to the analysis of the {\it RarestFirst} algorithm presented in ~\cite{LLT}. First, consider the solution $\cal X'$ output by the {\it MinDiameter} algorithm, and let $a_{rare} \in T$ be the skill possessed by the least number of individuals in $\cal X$. Also, let $i^*$ be the individual picked from set $S(a_{rare})$ to be included in the solution $\cal X'$. Now, consider two other skills $a_1 \ne a_2 \ne a_{rare}$ and individuals $i, i' \in \cal X$ such that $i \in S(a_1), i \not \in S(a_2)$ and $i' \not \in S(a_1), i' \in S(a_2)$. If $i, i'$ are part of the team reported by the {\it MinDiameter} algorithm, it means that $d(i^*, i) \le d_k(i^*, S(a_1), k_1)$ and $d(i^*, i') \le d_k(i^*, S(a_2), k_2)$. Due to the way the algorithm operates, we can lowerbound the Cc-R cost of the optimal solution, $\cal X^*$,  as follows: 

\begin{equation}\label{DiaBounds}
d(i^*, i) \le \mbox{Cc-R}({\cal X^*}) \mbox{ and }  d(i^*, i') \le \mbox{Cc-R}({\cal X^*})
\end{equation}
Since we have assumed that the distance function $d$ satisfies the triangle inequality, \\
$d (i, i') \le d(i, i^*) + d(i^*, i')$\\
By applying the bounds given in ~\eqref{DiaBounds}, we get the proposed approximation factor. \\
$d (i, i') \le$ Cc-R($\cal X^*) +$ Cc-R(${\cal X^*}) \le 2 \cdot $Cc-R($\cal X^*$).
\end{proof}

Algorithm {\it MinDiameter} is as follows. For each individual, say $i_r \in S(a_{rare})$ where $a_{rare}$ is the rarest skill (the skill with the minimum size support set $S$), and for each skill $a_i \in {\cal T}$, the algorithm finds the distance to all the nodes in the support set $S(a_i)$. Then, for each support set $S(a_i)$, it chooses the $k_i$-size subset of $S(a_i)$ such that the maximum shortest path distance between $i_r$ and the nodes in this subset is minimum among all $k_i$-size subsets of $S(a_i)$. We call this distance as $k_i$-th shortest distance between $i_r$ and $S(a_i)$ and denote it as $d_k(i_r, S(a_i), k_i)$. Further, we denote the set of $k_i$ shortest paths between $i_r$ and each of the nodes belonging to the corresponding $k_i$-size subset of $S(a_i)$ as $Path_k(i_r, S(a_i), k_i)$. Thus, for each $i_r \in S(a_{rare})$ the algorithm has identified $k_i$ nodes of skill $a_i$, thereby forming a possible solution team that satisfies the constraints. Finally, the algorithm then picks one of these solutions that has minimum diameter. 
The time complexity of the algorithm {\it MinDiameter}, assuming that all pairs shortest paths are pre-computed, is $O(n^2)$.

\section{Experiments}
\label{sec:exp}
In this section, we evaluate various team formation algorithms using the collaboration graph extracted from the DBLP bibliography server. We show that the density of the subgraph returned by our algorithms {\it s-DensestAlk} and {\it m-DensestAlk} perform favorably in comparison to the algorithm {\it MinDiameter}. We also show that our algorithm for density version provides high-quality results in terms of effective communication and collaboration (according to several metrics). In this section, we also present three simple heuristic extensions that can be used to process the solutions returned by {\it s-DensestAlk} and {\it m-DensestAlk} in order to further improve these solutions by reducing size and improving connectivity, while maintaining high density. 
Finally, examples of teams reported by our methods qualitatively corroborate the effectiveness of our framework.

\subsection{Experimental Setup}
We use a snapshot of the DBLP data downloaded on May 17, 2010 to create a benchmark data set for our experiments. We only consider the papers published in the domains of Database (DB), Data Mining (DM), Artificial Intelligence (AI) and Theory (T) conferences. We select papers from a total of $21$ conferences categorized as follows: $DB = \{\textsc{sigmod, vldb, icde, icdt, edbt, pods}\} $, $DM = \{\textsc{www, kdd, sdm, pkdd, icdm}\}$, $AI =$ \{\textsc{icml, ecml,colt, uai}\}, and $T = \{\textsc{soda, focs, stoc, stacs, icalp, esa}\}$.
We define the skill set $\cal T = \{\textsc{t, ai, db, dm}\}$. The set of skilled individuals $X_{dblp}$ consists of the set of authors with at least three papers in these domains. Two authors $i_1, i_2$ are connected in the graph $G_{dblp} (X_{dblp}, E)$ if they appear as co-authors in at least two papers in DBLP. The above procedure creates a set $X_{dblp}$ consisting of $6137$ individuals. The maximum component size is $3869$. We use this for all the experiments. The skill set $X_i$ of each such author $i$ is defined as $X_i = \{ t \mid t \in {\cal T} \ and \ P_i(t) \ne \phi \}$ where $P_i(t)$ denotes the set of papers coauthored by $i$ that are published in the conferences in the domain $t$. 

{\it Maximum Density Team Formation.} To evaluate the algorithms ~\ref{algo:sDlk} and ~\ref{algo:mDlk}, for each edge $e(i_1, i_2)$, we set the edge weight $w(i_1, i_2) = |P_{i1} \cap P_{i2}|$, where $P_{i1}$ and $P_{i2}$ represent the set of papers published by $i_1$ and $i_2$ respectively. For the subgraph, say $G'(V', E')$ returned by these algorithms, we calculate the density, $d' = \frac{W(G')}{|V(G')|}$. 

{\it Minimum Diameter Team Formation.} Here, we set edge-weight $w(i_1, i_2) = 1 - \frac{|P_{i1} \cap P_{i2} |}{| P_{i1} \cup P_{i2} |}$ as suggested in the paper~\cite{LLT}.  For comparison, when a subgraph $G'(V', E')$ is returned by the {\it MinDiameter}, we compute its density by considering the induced subgraph on vertices $V''$, say $G''$ (which could contain more edges that $E'$). The density calculated is $d'' = \frac{W(G'')}{|V(G'')|}$ with edge weights $w(i_1, i_2) =  |P_{i1} \cap P_{i2}|$. 


\subsection{Heuristic algorithms}
\label{subsec:heuristic}
The objectives for {\it sTF-Density} and {\it mTF-Density} are to find subgraphs with maximum density satisfying the skill requirements. 
However, this does not necessitate a connected graph; disconnectedness makes meaningful collaboration in real-life difficult. This is an artifact of the objective function, rather than the algorithm. While the solutions returned by our algorithms {\it sTF-Density} and {\it mTF-Density} never had more than three components, we would like solutions with only one component. 
This is the motivation for heuristic improvements. A dual benefit in our suggested heuristics is that we are able to reduce the number of nodes in the returned subgraph.
The hope is that these can be achieved without compromising significantly on the density.

\begin{algorithm}[]
\caption{EnhanceComponent($G', T$)}
\label{algo:ecDlk} 
\begin{algorithmic}[1]
\STATE (Note: $T= \{<a, k>\}$)
\FOR {each component $C_i \in G'$}
\STATE $C'_i \leftarrow C_i$, $Ni \leftarrow N(C_i) - C_i$ 
\STATE (note: $N(C_i)$ denotes neighbors of nodes in $C_i$)
\FOR {each node $v \in N_i$}
\IF { $| V(C'_i) \cap S(a) | \ge k$}
\STATE ${\cal C'} \leftarrow {\cal C'} \cup C'_i$
\STATE break for loop
\ENDIF
\IF {$v \in S(a)$}
\STATE $C'_i \leftarrow C'_i \cup v $
\ENDIF
\ENDFOR
\ENDFOR
\end{algorithmic}
\end{algorithm}
\begin{algorithm}[]
\caption{EnhancedDense($G, T$)}
\label{algo:edDlk}
\begin{algorithmic}[1]
\STATE $G' \leftarrow$  {\it s-DensestAlk(G, T)}
\STATE ${\cal C'} \leftarrow {\it EnhanceComponent}(G', T)$
\STATE Return $\arg \min_{C'_i \in {\cal C'}} | C'_i |$ 
\end{algorithmic}
\end{algorithm}
We present three heuristics. The starting point of each is the solution to {\it sTF-Density} or {\it mTF-Density}, as the case may be.
We name these heuristics as {\it EnhancedDense} (Algorithm ~\ref{algo:edDlk}), {\it PartialTrimmedDense} (Algorithm ~\ref{algo:ptDlk})  and {\it CompleteTrimmedDense (Algorithm ~\ref{algo:ctDlk})}. For simplicity in presentation, the algorithms are presented as extensions to {\it s-DensestAlk}, but they apply to {\it m-DensestAlk} analogously. The basic idea behind algorithm {\it EnhancedDense} is to inspect each individual component in the solution and attempt to modify it so that it itself satisfies the skill set requirement imposed by the task $\cal T$. This is done by examining the neighbors of the nodes in the component and adding those neighbors that are skilled nodes. The heuristics {\it PartialTrimmedDense} and {\it CompleteTrimmedDense}, take as an input the components generated by the algorithm {\it EnhanceComponent} (Algorithm ~\ref{algo:ecDlk})  and attempt to reduce the size of each component by removing the non-skilled nodes one by one without making the component disconnected. The {\it PartialTrimmedDense}  algorithm allows at most $k$ non-skilled nodes in the component whereas {\it CompleteTrimmedDense} attempts to remove as many non-skilled nodes as possible. The smallest resulting component with the required skilled nodes is then picked. This helps reduce the size of the solution, which is now a single component, and hopefully still sufficiently dense since the heuristic started with a 3-approximation to the density objective. 
\begin{algorithm}[]
\caption{PartialTrimmedDense($G, T$)}
\label{algo:ptDlk}
\begin{algorithmic}[1]
\STATE (Note: $T = \{<a, k>\}$)
\STATE $G' \leftarrow$  {\it s-DensestAlk(G, T)}
\STATE ${\cal C'} \leftarrow {\it EnhanceComponent}(G',T)$
\FOR {each component $C'_i \in {\cal C'}$}
\STATE $Q \leftarrow \{ u \mid u \in C'_i \mbox { and } u \not \in S(a) \}$
\WHILE {$Q$ not empty and $| V(C'_i) - S(a) | > k$}
\STATE $u_{min} \leftarrow$ pop lowest degree node from $Q$
\IF {($C'_i - u_{min}$) is connected}
\STATE $C'_i \leftarrow C'_i - u_{min}$
\ENDIF
\ENDWHILE
\IF {$| V(C'_i) - S(a) | > k$}
\STATE ${\cal C'} \leftarrow {\cal C'} - C'_i$
\ENDIF
\ENDFOR 
\STATE Return $\arg \max_{C'_i \in {\cal C'}} density(C'_i)$ 
\end{algorithmic}
\end{algorithm}
\begin{algorithm}[]
\caption{CompleteTrimmedDense($G,T$)} 
\label{algo:ctDlk}
\begin{algorithmic}[1]
\STATE (Note: T $= \{<a, k>\}$)
\STATE $G' \leftarrow$  {\it s-DensestAlk(G,T)}
\STATE ${\cal C'} \leftarrow {\it EnhanceComponent}(G', T)$
\FOR {each component $C'_i \in {\cal C'}$}
\STATE $Q \leftarrow V(C_i) - S(a)$
\WHILE {$Q$ is not empty}
\STATE $u_{min} \leftarrow$ pop lowest degree node from $Q$
\IF {($C'_i - u_{min}$) is connected}
\STATE $C'_i \leftarrow C'_i - u_{min}$
\ENDIF
\ENDWHILE
\ENDFOR 
\STATE Return $\arg \min_{C'_i \in {\cal C'}} | V(C'_i) |$
\end{algorithmic}
\end{algorithm}

\subsection{Single Skill Team Formation}
We run the single skill experiments for $k \in \{3, 5, 7, 9, 11, 13, 15\}$. For each value of $k$, we have a separate run for each skill $a \in \{\textsc{t, ai, db, dm}\}$. 
We calculate statistics, such as density, size, and number of connected components for each solution and present the mean over these four runs as the final statistic. 

Figures~\ref{fig:kSingle}(a) and~\ref{fig:kSingle}(b) show ($k$ vs. density) and  ($k$ vs. size) plots, respectively. From these plots, we can see that the density obtained by {\it s-DensestAlk} significantly outperforms the density obtained by {\it MinDiameter} algorithm. This is of course expected. However, the downside is that the size of the solution to {\it s-DensestAlk} is also larger (and in some cases disconnected). 
The heuristic {\it EnhancedDense} essentially adds neighbors to each component in the solution so that the resulting component satisfies the required skill-set and then picks the one with the smallest size. Therefore connectivity is guaranteed. Further, the reduction in density is not much and even the cardinality has reduced compared to the original solution. This also means that the solution returned by {\it s-DensestAlk} contained a good component to start with - by good component we mean a component that has most of the skills satisfied and has high density. 

\begin{figure*}[t]
\begin{center}
\subfigure[$k$ vs. density]{\includegraphics[angle=270, scale=0.20]{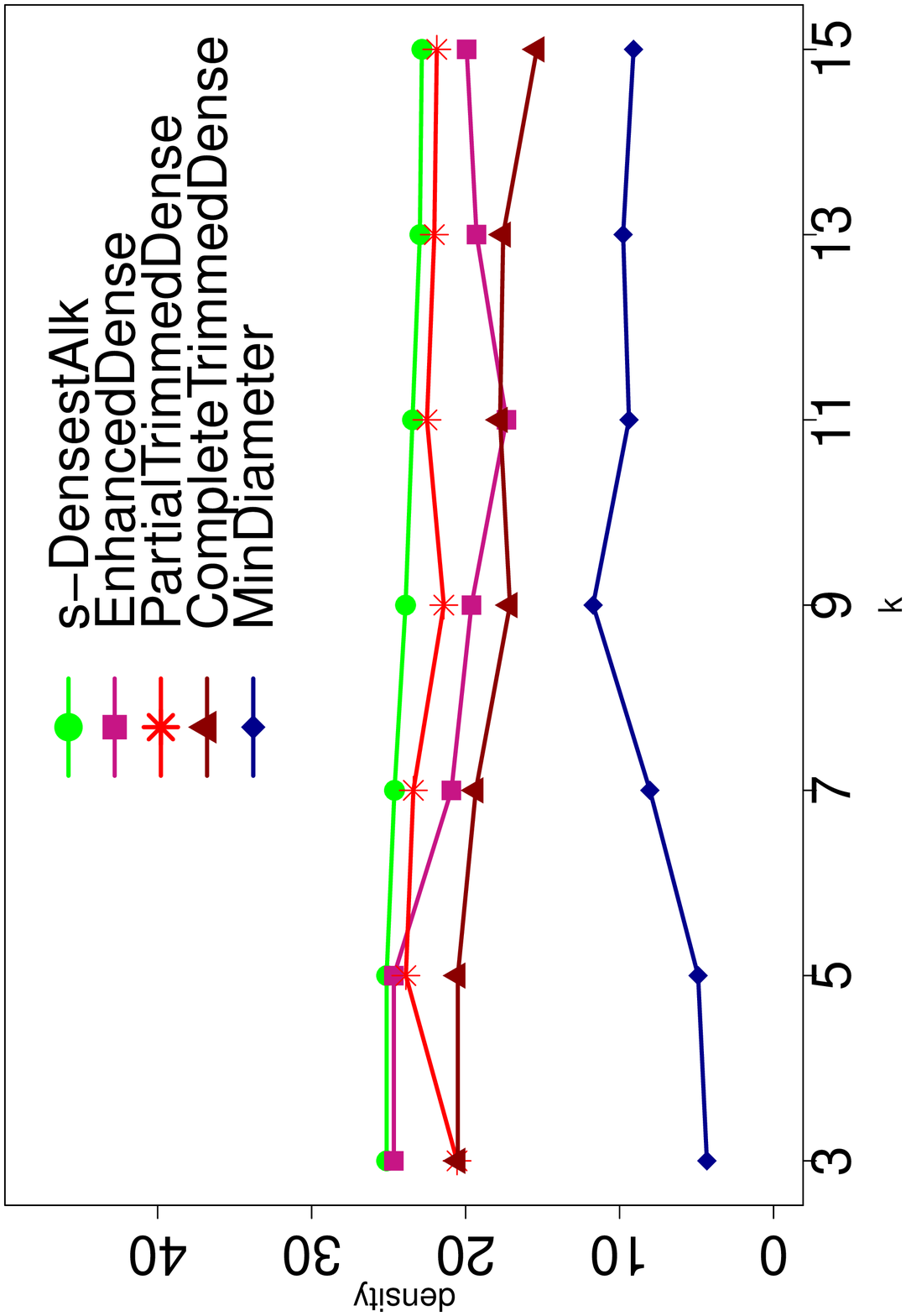}}
\subfigure[$k$ vs. size]{\includegraphics[angle=270, scale=0.20]{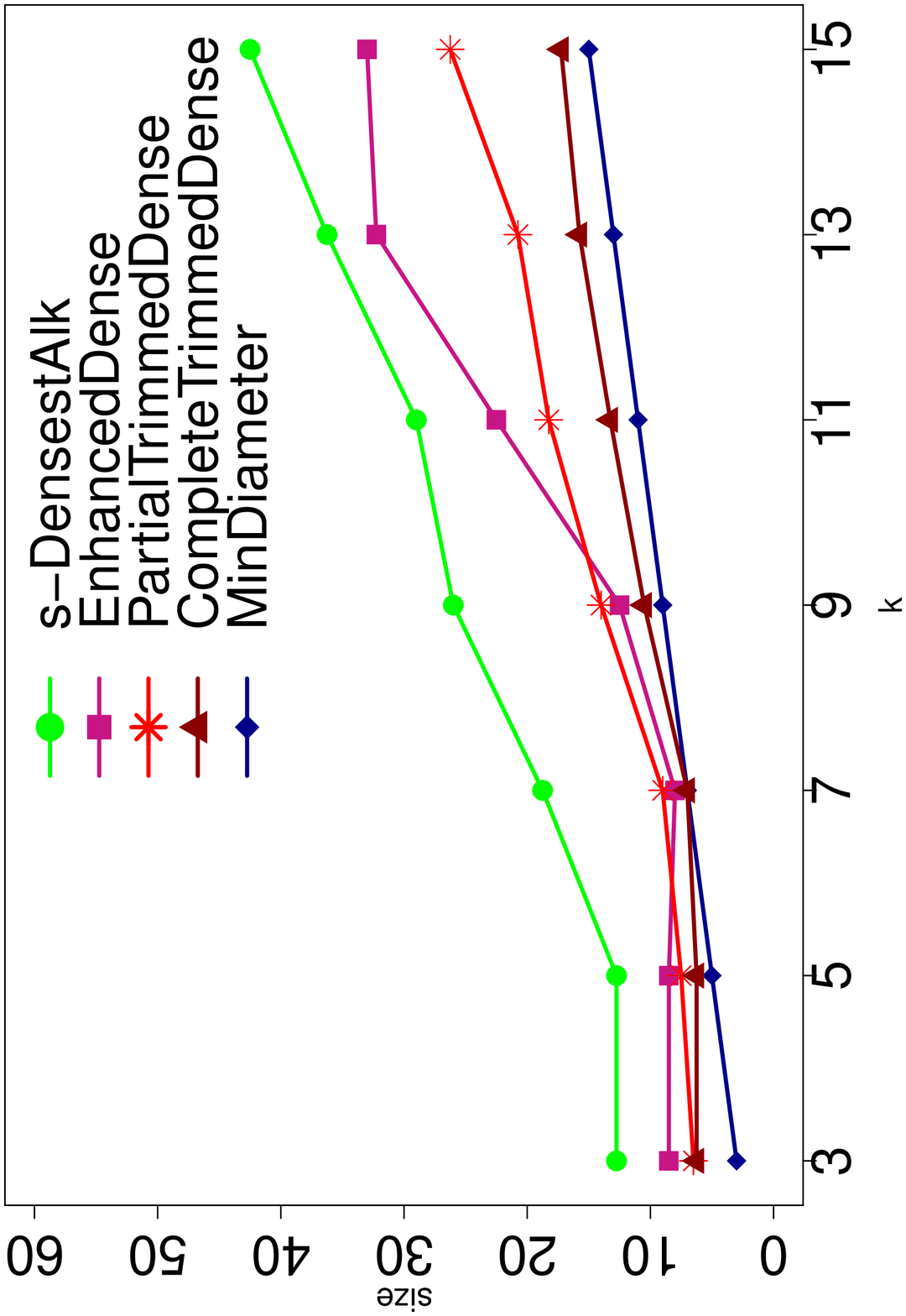}}
\subfigure[$k$ vs. density per node]{\includegraphics[angle=270, scale=0.20]{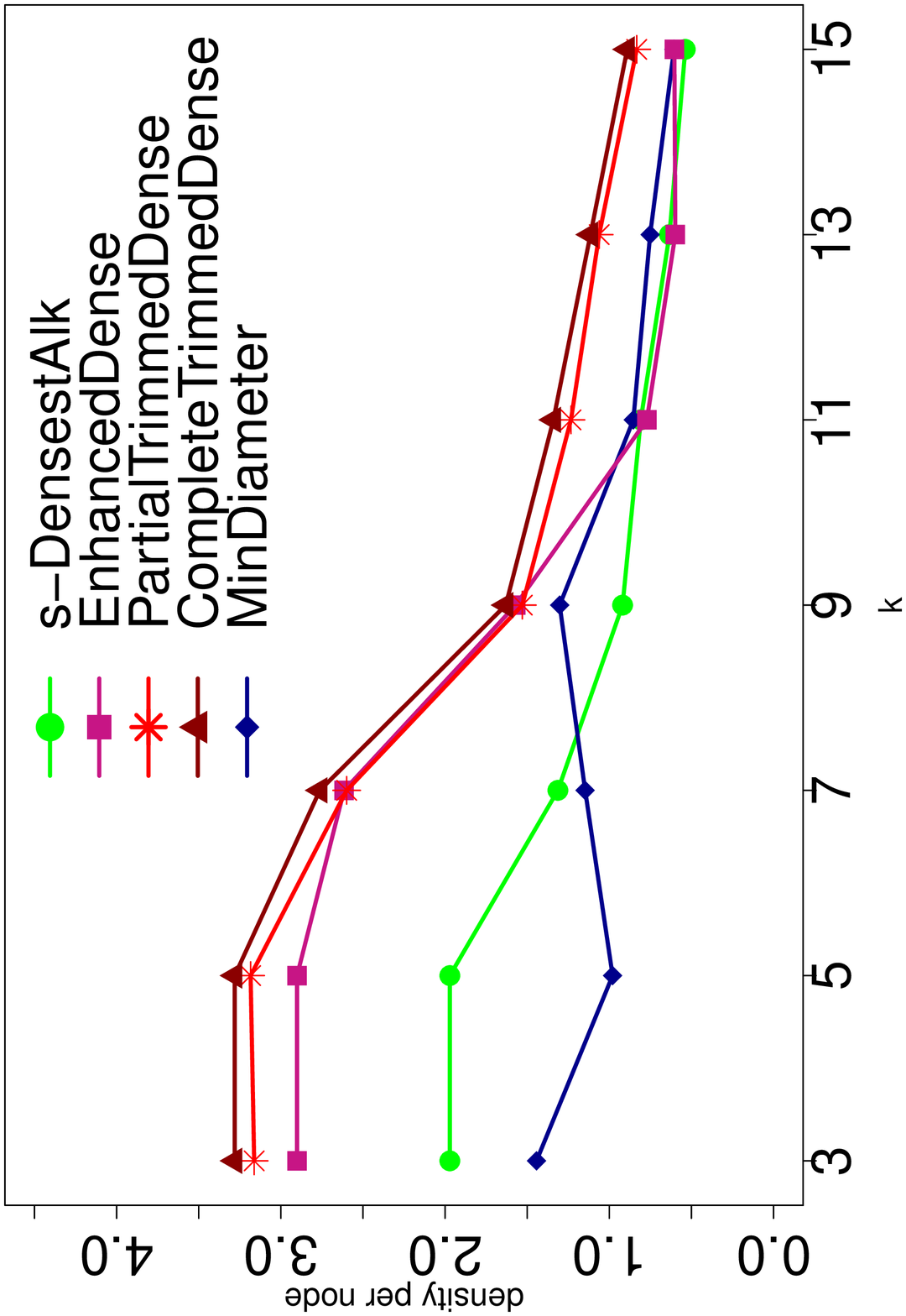}}
\caption{single skill experiments}\label{fig:kSingle}
\end{center}
\end{figure*}

Now, notice that by applying heuristics {\it PartialTrimmedDense} and {\it CompleteTrimmedDense}, we attempt to remove the non-skilled nodes one by one from each of these enhanced components (while maintaining connectivity). As the plots show again, this serves the purpose of significantly reducing the cardinality of the solution and as a hard constraint the algorithm still satisfies the skill requirement. 
It can be observed from the plots that {\it PartialTrimmedDense} has density almost equal to the {\it s-DensestAlk} and the cardinality is reduced by more than fifty percent. Further, {\it CompleteTrimmedDense} gives a solution that has cardinality almost equal to $k$ (which would be optimal), with very little reduction in density. Finally, we plot ($k$ vs. density per node) in Figure~\ref{fig:kSingle}(c). While this figure can be deduced, we present it to highlight the observation that the heuristics reduce the cardinality without compromising on the density. Notice that in this plot, {\it CompleteTrimmedDense} has the highest value of density per node, for every value of $k$. 

Given that density is intuitively a better measure of team collaboration, these results show that we are completely able to eliminate connectivity issues inherent in this objective, and output small yet sufficient, and highly collaborative (dense) teams. 


\subsection{Multiple Skill Team Formation}
We run the multiple skill experiments for $k \in \{3, 8, 13, 18, 23, 28 \}$ and for each run, we randomly choose $k$ skills from $\cal A = \{ \textsc{t, ai, db, dm} \}$. For example, when $k=3$, we may choose a skill (multi)set \textsc{\{t, t, dm\}} which means we want a subgraph that contains at least two authors of skill $T$ and one author of skill $DM$. Recall that a given author can have multiple skills and therefore the solution may consist of a subgraph whose size is less than the value of $k$. 

\begin{figure}[h]
\begin{center}
\subfigure[$k$ vs. density]{\includegraphics[angle=270, scale=0.2]{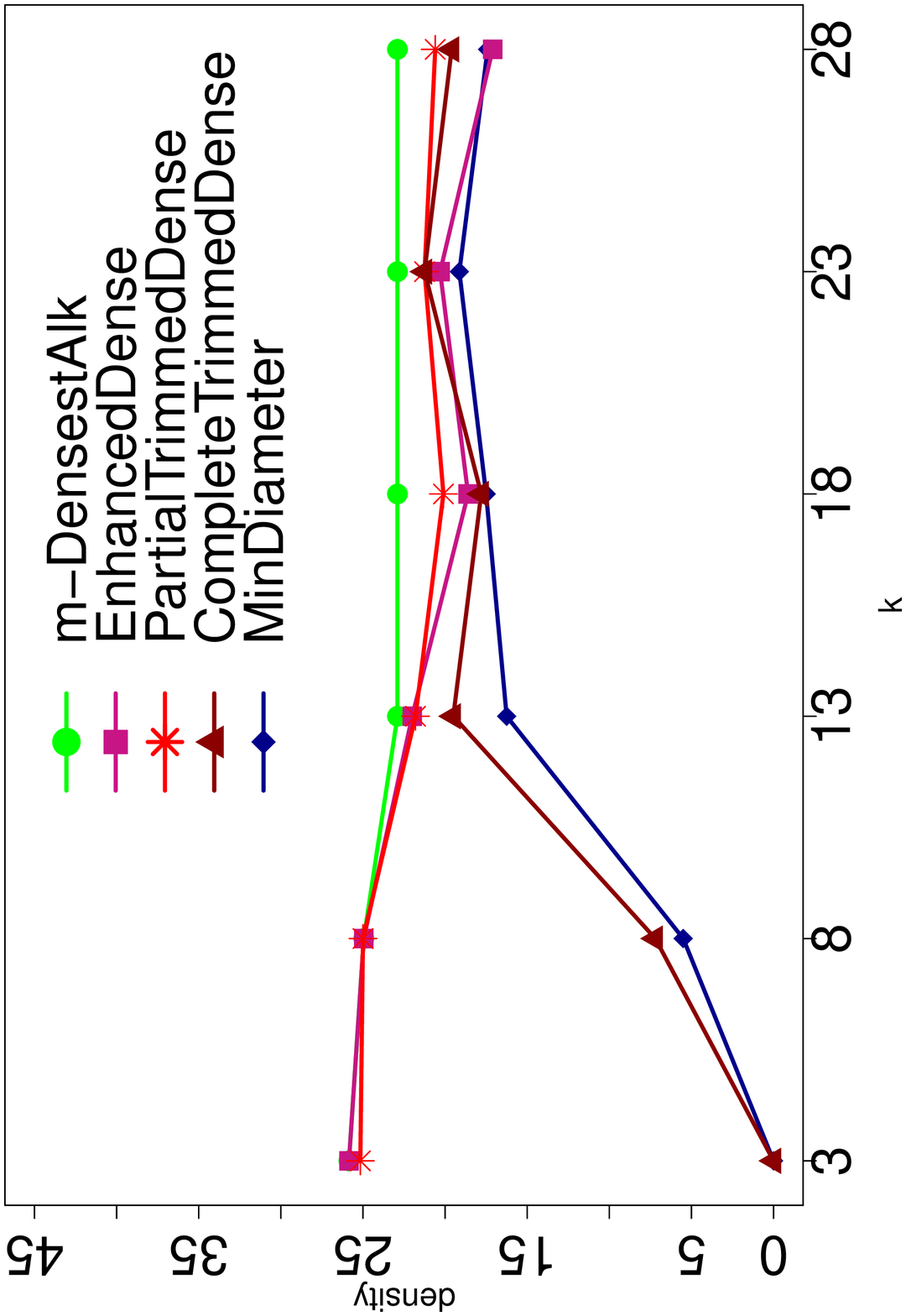}}
\subfigure[$k$ vs. size]{\includegraphics[angle=270, scale=0.2]{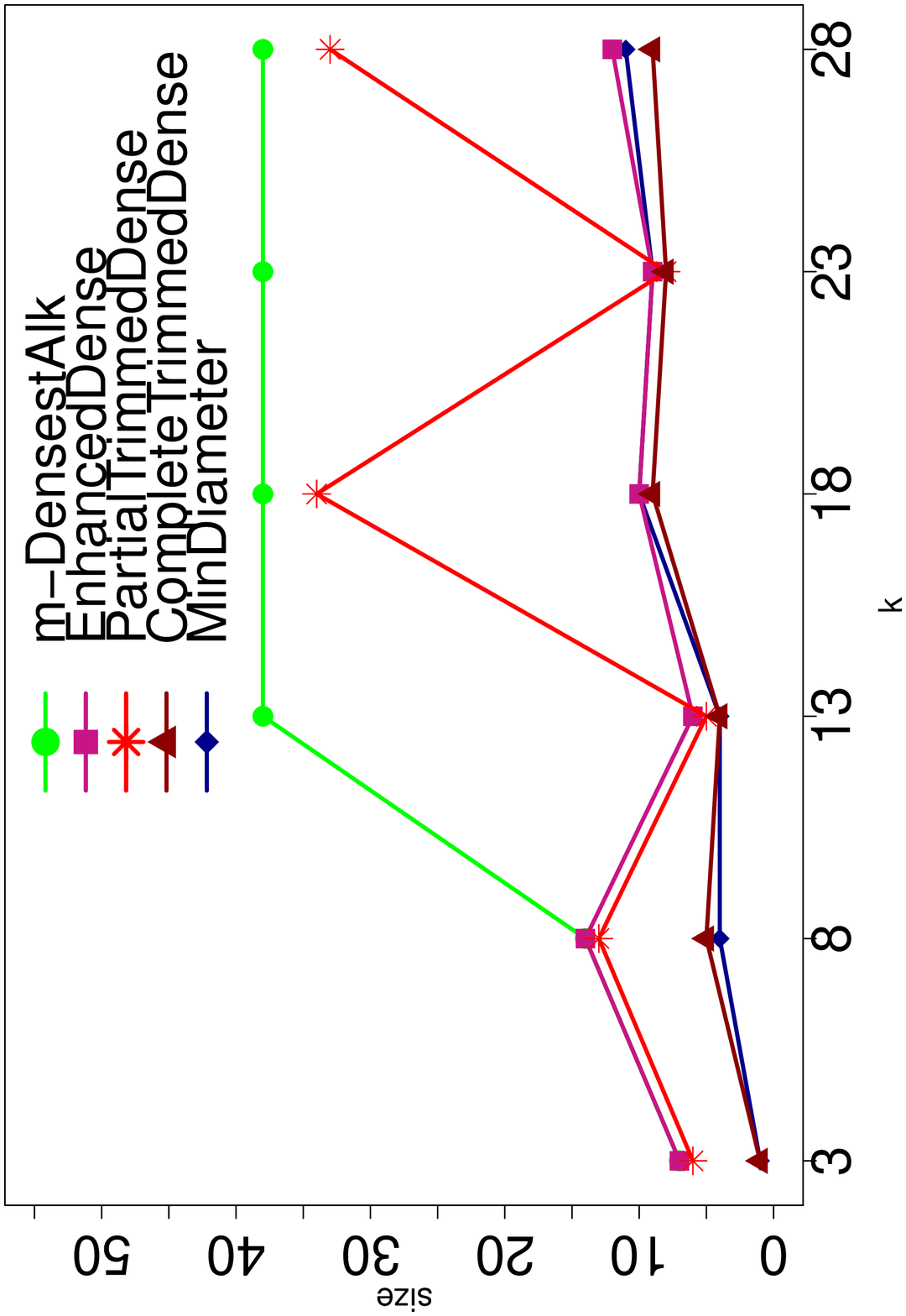}}
\caption{multiple skills experiments}\label{fig:kMulti}
\end{center}
\end{figure}

Figures~\ref{fig:kMulti}(a) and~\ref{fig:kMulti}(b) plots ($k$ vs. density) and  ($k$ vs. size), respectively, for multiple skill team formation experiments. 
Note that the plots for multiple skill experiments fluctuate more than single skill experiments. This is due to the randomness in picking the multiple skills requirements. Also, some solutions returned are of the same size even as $k$ is increased. This is because sometimes the same solution satisfies different required skill sets.

In these figures, we again see that {\it m-DensestAlk} algorithm has the highest density.
Note that the solution with density $0$ and size $1$ corresponds to an individual that has all the required skills. 
Further, similar to single skill experiments, we apply the heuristics mentioned earlier in order to get a connected subgraph without compromising on the density much.  Figure~\ref{fig:kMulti}(b) shows that the heuristics have been effective in reducing cardinality. In fact, the cardinality of the solution obtained by {\it CompleteTrimmedDense} is lesser than $k$ because a single individual can satisfy more than one skills. Further, for the $k\geq 13$ tasks, the density achieved by the heuristics is also close to that of {\it m-DensestAlk}. While sometimes certain heuristics have low density (e.g., $k=3$ or $k=8$), all heuristics offer a nice trade-off between size and density (and return connected solutions by design). For each value of $k$, there exists at least one solution with density close to maximum-density and small cardinality. We omit the density per node plot here due to lack of space, and because it can be deduced from Figures~\ref{fig:kMulti}(a), (b).


\subsection{Density Vs. Diameter Analysis}
In the previous sections, we demonstrated the effectiveness of various heuristic algorithms in order to obtain a solution subgraph that is connected, small and dense. The intuition behind suggesting the density as a metric for team collaborative compatibility is that a denser graph has more edges between nodes, resulting in a greater possibility for collaboration. Small diameter does not necessarily guarantee this property. In this section, we consider three metrics for comparing Density and Diameter based approaches: {\it teamPubs, partialTeamPubs}  and {\it teamPubRatio}. The metric {\it teamPubs} defines the number of publications where all the authors of the publication belong to the solution subgraph. {\it partialTeamPubs} defines the number of publications where at least half of the authors of the publication belong to the solution subgraph. These two metrics give a good indication of the collaboration compatibility of reported teams. In addition, we propose another metric {\it teamPubRatio} which is essential for the comparative study because it is affected by not only the team-members' collaboration compatibility but also on the size of the team. In this case, for each publication, say $p'$, we compute the ratio of $\frac{\mid X' \cap A' \mid}{\mid X' \cup A' \mid}$ where $X'$ is the set of authors in the solution subgraph and $A'$ is the set of authors of the publication $p'$. That is, {\it teamPubRatio} measures the Jaccard similarity between a publication's author set and a team's author set. We then take the average of this quantity over all the publications. 

We now describe the details of the evaluation strategy used to calculate these metrics. For both single skill team formation and multiple skill team formation problems, we consider the teams that were proposed as a solution in the previously described experiments. In particular, we consider the solutions reported by the algorithms {\it CompleteTrimmedDense} and {\it MinDiameter}. We choose only {\it CompleteTrimmed-Dense} algorithm for density because it reports the smallest solutions. The goal is to establish that the small teams obtained by {\it CompleteTrimmedDense} also achieve superior results for the three metrics of collaboration compatibility mentioned above. The results of metric evaluation are shown in the plots ~\ref{fig:kDensityVsDiaSingleSkill} and ~\ref{fig:kDensityVsDiaMultiSkill} for single skill and multi skill experiments, respectively. In each plot, value of $k$ is plotted along the $x$-axis and the value of the the metrics for the corresponding solution subgraphs are along the $y$-axis. In case of single skill experiments, for each $k$, the metric value reported is the average of metric values for the solutions corresponding to each of the skills  $\{$ \textsc {t, ai, db, dm}  $\}$. Further, for the metrics {\it teamPubs} and {\it partialTeamPubs} the $y$-axis defines the resulting number of publications whereas for the metric {\it teamPubRatio}, the $y$-axis defines the {\it scaled} ($100000$ times) metric value. From these plots it can be observed that in both single skill and multi skill team formation problems, the algorithm {\it CompleteTrimmedDense} consistently outperforms the algorithm {\it MinDiameter} for all the three metrics. In case of single skill, for each of the three metrics, and for most values of $k$, the metric value for {\it CompleteTrimmedDense} is about twice that of {\it MinDiameter}. In multi skill, the variation is somewhat larger, but {\it CompleteTrimmedDense} consistently displays superior metric values for all cases. Recall that the size of the solution teams by both these algorithms were very similar (and the metric {\it teamPubRatio} does not necessarily benefit with larger team size); therefore, these experiments suggest that density-based team formation leads to teams with better collaborative compatibility than the diameter-based team formation.
\begin{figure*}[h]
\begin{center}
\subfigure[k vs.Number of Publications]{\includegraphics[angle=270, scale=0.30]{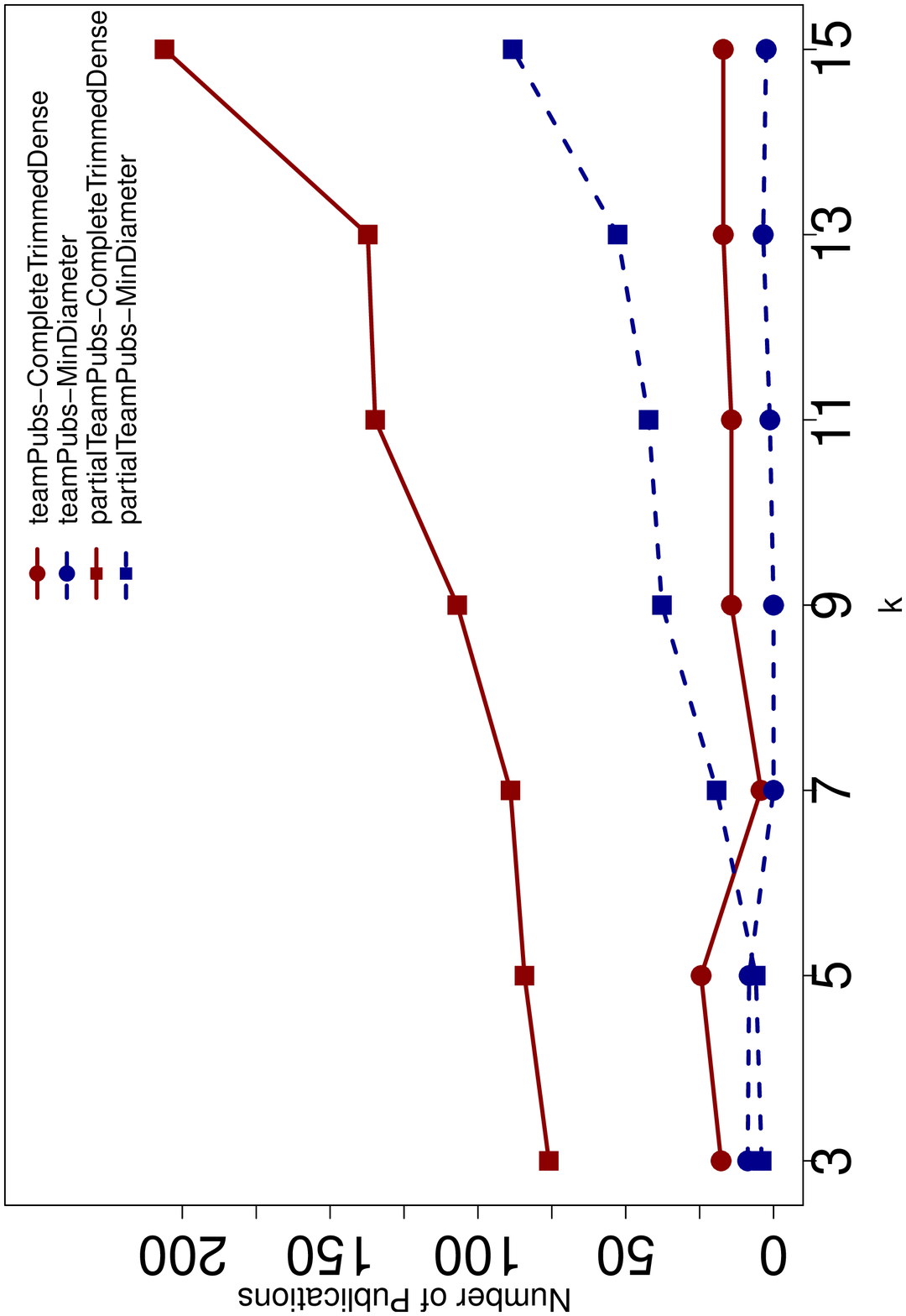}}
\subfigure[k vs. Jaccard Distance]{\includegraphics[angle=270, scale=0.30]{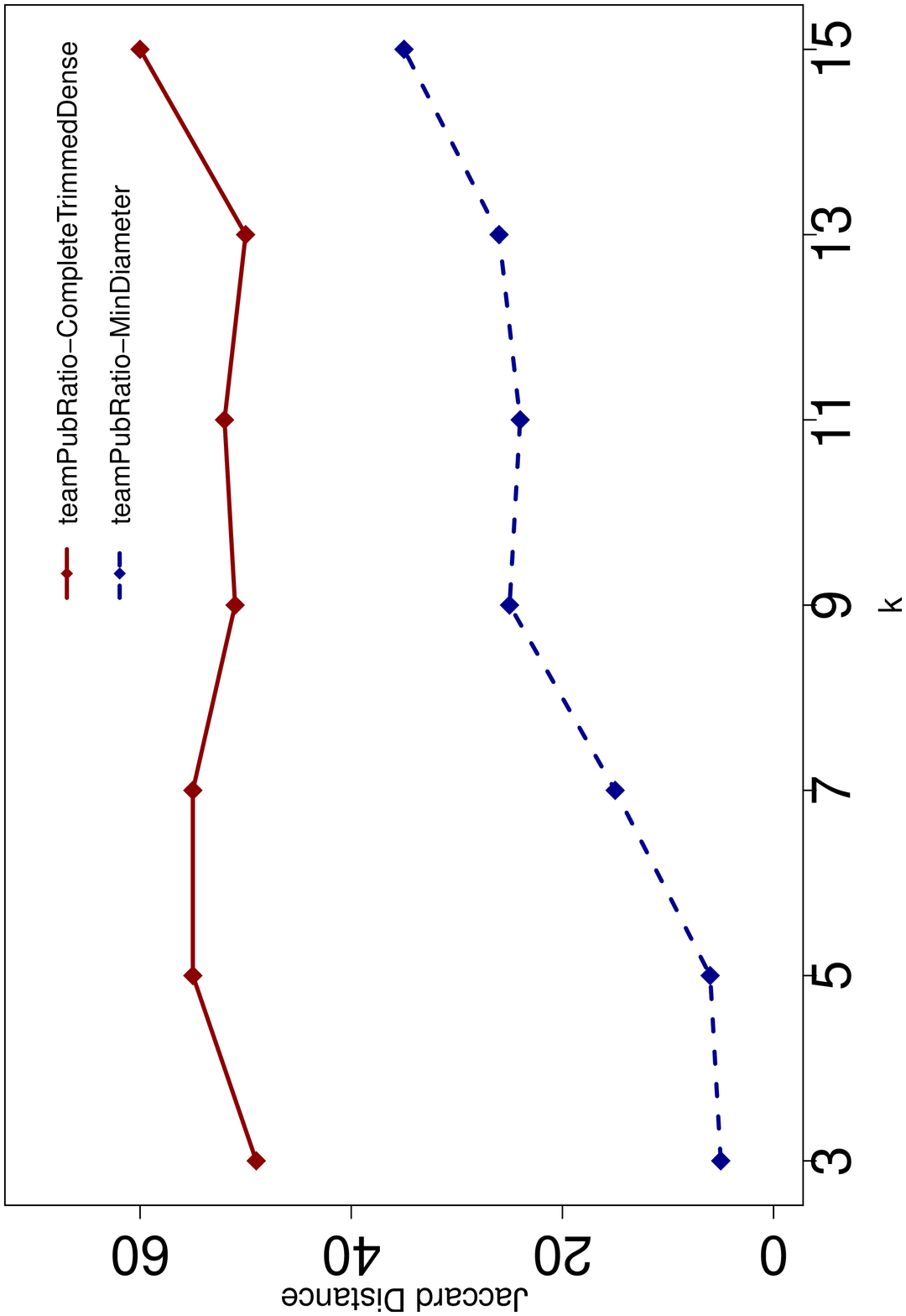}}
\caption{Single Skill Density vs. Diameter Analysis}\label{fig:kDensityVsDiaSingleSkill}
\end{center}
\end{figure*}
\begin{figure*}[h]
\begin{center}
\subfigure[k vs. Number of Publications]{\includegraphics[angle=270, scale=0.30]{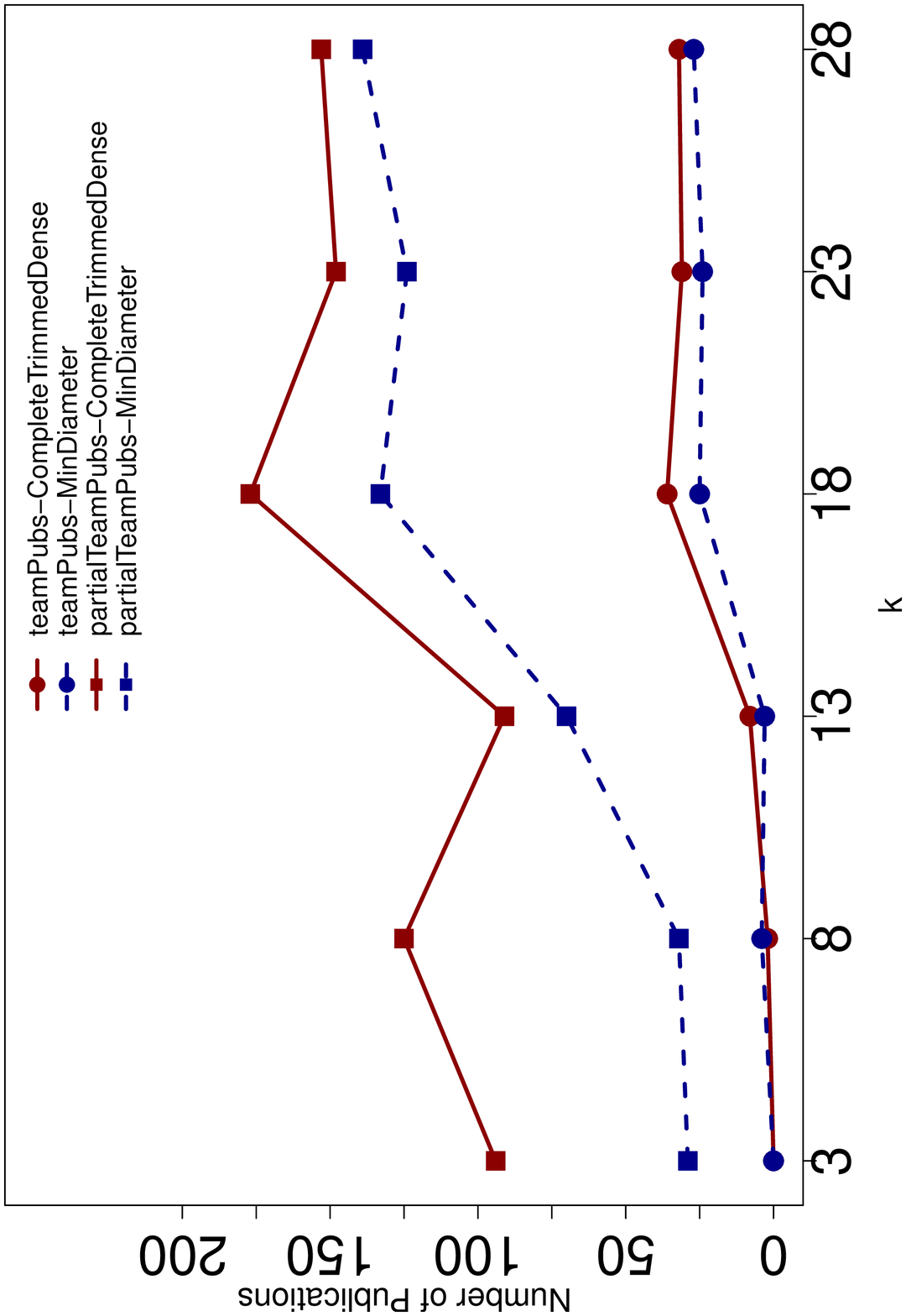}}
\subfigure[k vs. Jaccard Distance]{\includegraphics[angle=270, scale=0.30]{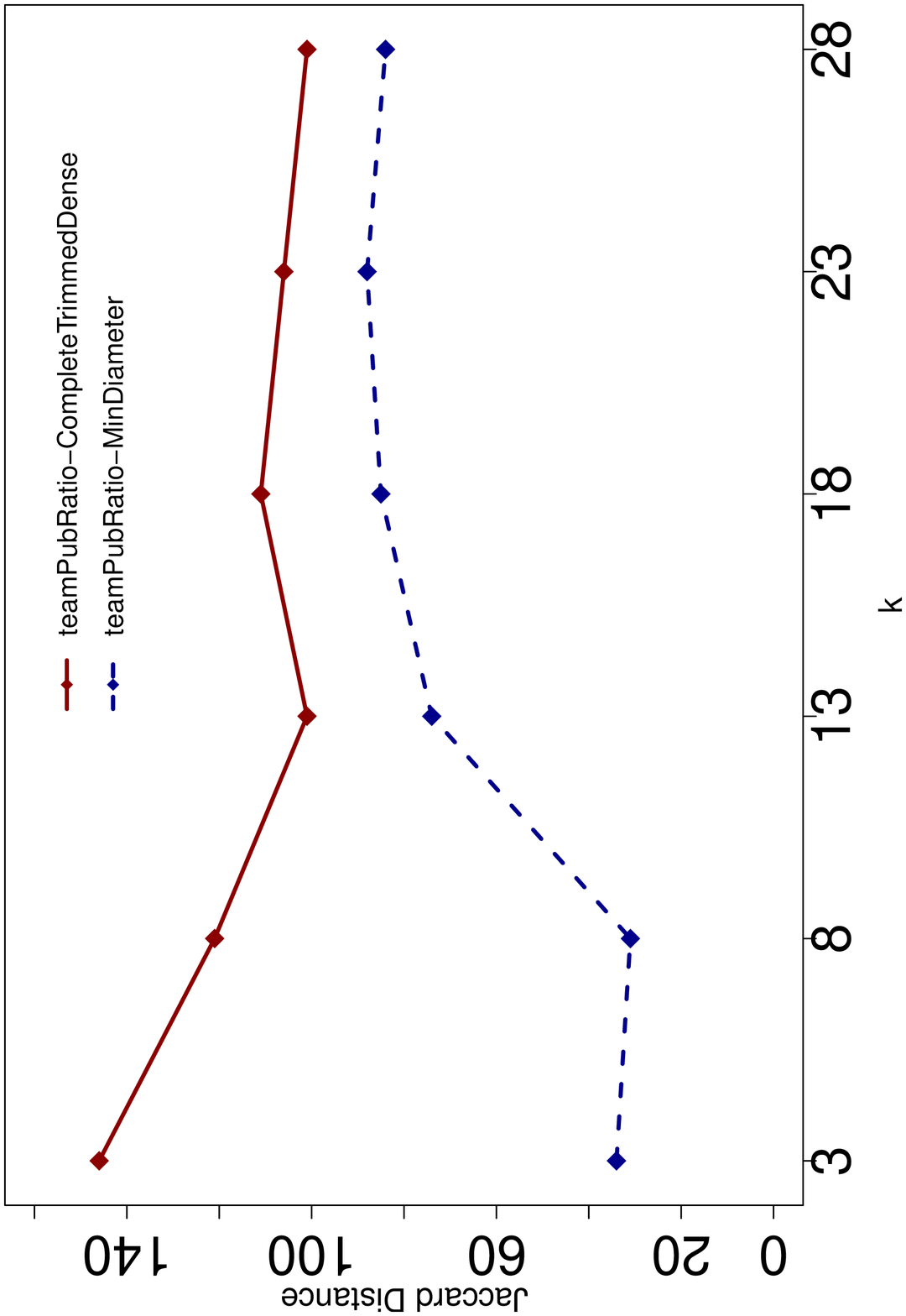}}
\caption{Multiple Skill Density vs. Diameter Analysis}\label{fig:kDensityVsDiaMultiSkill}
\end{center}
\end{figure*}
\subsection{Qualitative evidence}
To analyze the quality of teams that are returned by our algorithms for maximum density, we refer to the {\it Most Cited Computer Science Authors} list maintained by {\it CiteSeerX} (citeseerx.ist.psu.edu/stats/authors?all=true) which contains most cited $10000$ authors. We also refer to the list {\it Central Authors: Computer Science (all-time)} published at (confsearch.org/confsearch/ca.jsp)~\cite{KW}. This list contains $1000$ researchers ranked on the basis of DBLP publications.

We examine the authors of teams returned by {\it s-DensestAlk} and {\it m-DensestAlk} algorithms in order to determine how many authors in the team are among top $500$ and top $1000$ most cited authors according to the list maintained by {\it CiteSeerX}. Due to space constraints, we present only some representative lists from single skill team formation in Table~\ref{QualityAnalysis}. The lists are for $k=3$ for $T$ and $DB$, and for $k=15$ for $DM$ and $AI$.
Team members who appear among the top $500$ and $1000$ cited authors are indicated by bold and italic font, respectively.
We can see from these results that in each team, we have many top cited and prolific/famous authors (who may not be in the top $1000$ list). These results show that teams formed by choosing the objective of maximum density subgraph are {\em intuitively} meaningful. 

\begin{table*} [t]
\caption{Teams reported by s-DensestAlk.}
\label{QualityAnalysis}
\begin{tabular}{lll}
Skills&Authors\\
\hline
T(3)&{\bf Prabhakar Raghavan, Ravi Kumar, Philip S. Yu}, D. Sivakumar, Sridhar Rajagopalan,\\
&Andrew Tomkins \\
DB(3)&{\bf Philip S. Yu,  Haixun Wang, Jiawei Han}, Xifeng Yan, Wei Fan, Hong Cheng,\\
&Charu C. Aggarwal\\
DM(15)&{\bf Jiawei Han, Zheng Chen, Haixun Wang, Philip S. Yu }, Amr El Abbadi,\\
&Benyu Zhang,Wei Fan, Jun Yan, Shuicheng Yan, Hong Cheng, Qiang Yang, Ning Liu, \\
&Jian Pei, Charu C. Aggarwal, Xifeng Yan, Divyakant Agrawal\\
AI(15)&{\bf Ravi Kumar, Ronald Fagin, Philip S. Yu,  Christos Faloutsos,  Zheng Chen},\\
& {\it Wei-Ying Ma, Andrei Z. Broder, Jian-Tao Sun, Hongjun Lu}, Dou Shen,Shuicheng Yan,\\
&Anthony K. H. Tung, Wei Fan, Sridhar Rajagopalan, Qiang Yang, Eli Upfal,\\
&Andrew Tomkins, Jure Leskovec
\end{tabular}
\end{table*}

Complementary results are seen on using the second list, i.e. a list of top $1000$ ranked researchers~\cite{KW}. Instead of presenting another table with author names corresponding to this list, we adopt a different approach for measuring quality. We determine the overall rank of a team using the ranks of the individual authors within the team. To be specific, we compute the mean reciprocal rank of all the skilled individuals in the team and report the final rank of the team as $r = 1000 \frac{\sum_{i}\frac{1}{r_i}}{n_s}$ where $r_i$ denotes the rank of a skilled individual and $n_s$ denotes the skilled individuals in the team. Similar findings are observed if this quantity includes non-skilled nodes as well.
We report the ranks observed in Table~\ref{RankAnalysis}. Our original algorithms for maximum density and the subsequent heuristics form a team of highly ranked authors and perform significantly better than the minimum-diameter algorithm.
The validation of these algorithms over two different qualitative approaches provides further credence to this framework of team formation using a density based objective.

\begin{table}[t]
\caption{Team ranks based on top-ranked authors.}
\label{RankAnalysis}
\begin{tabular}{llll}
Skills&\{s/m\}-&CompleteTrimmed&Min\\
&DensityAlk&Dense&Diameter\\
\hline
T(3)&23.42&8.11&0\\
AI(3)&20.81&17.34&0\\
DB(3)&18.25&18.25&0\\
DM(3)&18.25&18.25&0\\
T(15)&14.95&19.67&2.05\\
AI(15)&15.25&14.48&1.86\\
DB(15)&10.54&10.80&0.75\\
DM(15)&9.55&9.93&1.05\\
T(1),DB(1),&18.25&100&24.39\\
DM(1)&&&\\
T(8),AI(6),&9.49&6.3&4.1\\
DB(8),DM(6)&&&\\
\end{tabular}
\end{table}

\section{Conclusions and Future Work}

We presented a novel approach for skilled collaborative team formation based on finding dense subgraphs. On the theoretical front, we showed constant factor approximation algorithms. On the practical side, we showed several heuristic improvements to our main provable algorithm, and compared it to the previous approach based on identifying small diameter subgraphs. Our experimental results show that the densest subgraph approach significantly outperforms the previous techniques on multiple different measures of collaborative compatibility. 

The formulations in this paper as well as~\cite{LLT} assume that for any given skill, each node in the network is either skilled or not skilled. A nice generalization would be to consider a range of expertise for any skill, modeled as a value between $0$ and $1$.
Another specific open question is to present more efficient 
algorithms for all objectives. 
%
Further, these definitions can be extended along many dimensions. 
In reality a team's 
value depends
on several complex assets such as
cultural backgrounds, geographical location, personalities, ability to work in teams etc. 
Some of these characteristics cannot even be measured easily. 
Yet, while the current models are a good start, it would be nice to investigate these directions and move closer to the motivating realistic scenario.

{
\bibliographystyle{abbrv}
\bibliography{arxiv1}
}

\end{document}